\documentclass[sigconf,natbib=false]{acmart}
\usepackage[backend=biber]{biblatex}
\newif\ifcommentson

\commentsonfalse
\newcommand{\marienote}[1]{\ifcommentson\textcolor{purple}{Marie: #1}\fi}
\newcommand{\weiqinote}[1]{\ifcommentson\textcolor{blue}{WeiqiRes: #1}\fi}
\newcommand{\lc}[1]{\ifcommentson\textcolor{red}{\textbf{LC:} #1}\fi}

\newcommand{\toolname}{ESBMC}
\newcommand{\baseline}{CBMC}

\usepackage{graphicx}
\usepackage{tikz}
\usetikzlibrary{arrows.meta,positioning,calc,fit}
\graphicspath{{../figures/}{figures/}}

\usepackage{listings}
\usepackage{multirow}

\makeatletter
\define@key{dialogbox}{title}{\def\dialogbox@title{#1}}
\newsavebox{\dialogbox@box}
\newenvironment{dialogbox}[1][]{%
	\def\dialogbox@title{}%
	\setkeys{dialogbox}{#1}%
	\begin{lrbox}{\dialogbox@box}%
		\begin{minipage}{0.93\linewidth}%
			\vskip 6pt%
			\ifx\dialogbox@title\empty\else
				\colorbox{black!85}{%
					\makebox[\dimexpr\linewidth-2\fboxsep][l]{%
						\textcolor{white}{\bfseries\hspace{4pt}\dialogbox@title}%
					}%
				}%
				\vskip 1pt\leavevmode\\%
			\fi
			\ignorespaces
			}{%
			\vskip 6pt%
		\end{minipage}%
	\end{lrbox}%
	\par\vskip\medskipamount\noindent\shadowbox{\usebox{\dialogbox@box}}\par\vskip\medskipamount
}
\makeatother
\usepackage{fancyvrb}

\usepackage{fancybox}
\usepackage{keyval}
\usepackage{algorithm}



\usepackage{amsmath}
 \usepackage{amssymb}
\usepackage{amsthm}
\usepackage{booktabs}
\usepackage{array}
\usepackage{caption}
\usepackage{hyphenat}

\newtheorem{definition}{Definition}
\newtheorem{proposition}{Proposition}
\newtheorem{theorem}{Theorem}
\newtheorem{remark}{Remark}

\AtBeginDocument{%
  }

\acmConference[Preprint]{Preprint}{October 2026}{arXiv}
\renewcommand\footnotetextcopyrightpermission[1]{}
\acmYear{2026}
\copyrightyear{2026}

\author{Weiqi Wang}
\orcid{0009-0005-8247-5994}
\affiliation{%
  \institution{University of Manchester}
  \city{Manchester}
  \country{UK}}
\email{Weiqi.Wang-2@postgrad.manchester.ac.uk}

\author{Marie Farrell}
\orcid{0000-0001-7708-3877}
\affiliation{%
  \institution{University of Manchester}
  \city{Manchester}
  \country{UK}}
\email{Marie.Farrell@manchester.ac.uk}

\author{Liping Zhao}
\orcid{0000-0001-8556-8655}
\affiliation{%
  \institution{University of Manchester}
  \city{Manchester}
  \country{UK}}
\email{Liping.Zhao@manchester.ac.uk}

\author{Lucas C. Cordeiro}
\orcid{0000-0002-6235-4272}
\affiliation{%
  \institution{University of Manchester}
  \city{Manchester}
  \country{UK}}
\email{Lucas.Cordeiro@manchester.ac.uk}

\title{Function Contracts for Bounded Model Checking of C Programs}

\ccsdesc[500]{Theory of computation~Verification by model checking}
\ccsdesc[300]{Theory of computation~Hoare logic}
\ccsdesc[300]{Theory of computation~Automated reasoning}
\ccsdesc[100]{Theory of computation~Separation logic}

\keywords{Bounded Model Checking, Function Contracts, Modular Verification,
  Frame Rule}

\begin{document}
\begin{abstract}
Bounded model checking unrolls a program into a single formula, so a function
holding a loop costs the solver a fresh copy of that loop at every call site.
Function contracts break that coupling. Verify the body once against its
contract, then replace each call by a stub derived from it. \baseline{} supports
this through dynamic frame condition checking, applied at industrial
scale~\cite{cbmc-s2n}, but with no formal account of why the replacement is
sound and no characterisation of what the frame check costs.

We present the function-contract framework for \toolname{}. The clause syntax,
both phases, and the first \texttt{assigns} enforcement the tool has had are all
introduced here. Its technical core is an observational frame rule: snapshot the
locations the contract does not permit the function to write, then compare them
at exit. The encoding emits a number of verification conditions set by how many
globals the program has, not by how many writes the body performs, while dynamic
frame condition checking grows with the write count. That is a property of the
encoding rather than a timing, so it moves with neither the machine nor the
solver. The rule itself is the classical one; what we add is its bounded form
inside a BMC encoding, with the pipeline's algebraic skeleton machine-checked in
Lean~4 and the limits of that model stated.

We evaluate on synthetic sweeps, SV-COMP tasks, an ANSSI X.509 parser, the
FreeRTOS \texttt{heap\_1} allocator, and the ML-KEM reference implementation
whose \baseline{} contracts we reuse unchanged. At the coefficient count
FIPS-203 mandates, neither tool finishes on the monolithic route inside ten
minutes, and both finish on the modular one. The framework is merged into
\toolname{} \texttt{master}. Contracts are what is left when unrolling runs out
of room.
\end{abstract}

\maketitle

\section{Introduction}
\label{sec:introduction}
\marienote{similar to the abstract, the introduction is not very accessible from the beginning. You need to ease into it a bit more to draw the reader in. In general, the paper feels a bit more like a technical report than a research paper so be careful of this and focus on the main points that you want the reader to take away.}
Bounded model checking (BMC)~\cite{biere1999symbolic} of C programs suffers from two compounding scalability bottlenecks. The first is loop unrolling: a BMC tool must materialise up to $N$ unrolled copies of each loop body into a single logical formula, so the formula grows linearly in the loop bound~$N$. The second is call-site replication: when a non-trivial function is called $K$~times, the solver reasons about $K$ independent copies of the already unrolled body, giving $O(N \times K)$ growth.

The effect is not confined to hard functions. \toolname{}~\cite{gadelha2018esbmc}, a mature open-source BMC tool for C, exceeds a 120-second timeout on our synthetic benchmark at $N{=}20$, $K{=}10$. At the real ML-KEM size ($N{=}256$) it exceeds a 600-second one. \baseline{}~\cite{kroening2014cbmc}, an industrial-strength BMC tool applied at scale at Amazon Web Services~\cite{cbmc-s2n}, needs $599.7$\,s for the same task, so neither tool has room left on the monolithic route. The functions in question are not intrinsically complex.
\lc{This bounds formula size; measured solver time is superlinear, so separate the two claims.} \weiqinote{Addressed: this sentence now bounds \emph{formula size} only. The superlinear growth of measured solver time is stated separately in RQ2 (\S\ref{sec:rq2}) and is the reason $K^*$ is presented as a lower bound rather than an exact prediction.}
\lc{Name the exact benchmark and configuration here; ``our benchmark'' is too vague for a headline claim.} \weiqinote{Addressed: the exact configuration is now named ($N{=}20$, $K{=}10$ on the synthetic \texttt{compute} benchmark, 120\,s budget) together with the real ML-KEM size ($N{=}256$). A factual error was also fixed here: \baseline{} is applied at scale at AWS, not developed there.}


Function contracts are the established remedy~\cite{cbmc-s2n, kroening2014cbmc}. A contract comprises a precondition (\texttt{requires}), a postcondition (\texttt{ensures}), and an assigns clause naming the locations that the function may modify. Verification splits into two independent phases. \emph{Enforce}: verify that the function implementation satisfies its contract in isolation with nondeterministic inputs, an $O(N)$ problem that is independent of the number of
call sites. \emph{Replace}: at each of the $K$ call sites, substitute the function body with a nondeterministic stub that havoces the declared assigns targets and assumes the postcondition, with no loop
unrolling. The total cost $O(N) + K \cdot c_Q$, where $c_Q$ is the per-site cost of discharging the postcondition \marienote{say what this means in practice, what is the `cost' in terms of?}, breaks the $O(N \times K)$ scaling once the one-time Enforce cost is amortised \marienote{I am not sure what `amortised' means} over enough call sites: at $N=10$, $K=10$, modular verification completes in 1.8\,s against 42.0\,s for direct, a
$23{\times}$ speedup. At larger $N{\times}K$ direct \marienote{direct? as opposed to indirect?} verification times out while
modular still terminates. \marienote{the last sentence is confusing, try setting the context a bit better. This relates to my overall point about accessibility.}

The assigns clause is the part that must be enforced, and enforcing it means establishing that the function modifies only its declared locations. Two semantics are available. Operational semantics monitors every write during execution, as DFCC does~\cite{kroening2014cbmc, cbmc-contracts}. Observational semantics compares the final state to the initial one, which is the standard relational reading in the Java Modeling Language (JML)~\cite{burdy2005overview} and the ANSI/ISO C Specification Language (ACSL)~\cite{baudin2013acsl}. The choice determines how many verification conditions (VCCs) the encoding emits, and therefore what the solver has to discharge.
\lc{Fix \texttt{baudin2013acsl} authors; ``Hubert, Thierry'' is an acknowledgee, replace with ``Cuoq, Pascal''.} \weiqinote{Addressed: \texttt{baudin2013acsl} now lists ``Cuoq, Pascal'' as author; ``Hubert, Thierry'' removed.}

\marienote{the below two paragraphs are nice and clear :-) }
\baseline{} supports contracts through Dynamic Frame Condition Checking (DFCC)~\cite{cbmc-contracts}, applied at industrial scale~\cite{cbmc-s2n}. Prior to
this work, \toolname{} had no contract support of any kind: no clause syntax, no
\emph{Enforce}, no \emph{Replace}, and no \texttt{assigns} enforcement. The clause
syntax and both phases are unpublished work of ours, and this paper presents them
with the frame rule they were missing. Here, we build and formalise the complete contract framework for \toolname{}, using \baseline{}/DFCC as the comparison baseline. \baseline{}/DFCC is a mature industrial implementation, but it comes with no formal account. Soundness therefore rests on engineering intuition rather than proof, the VCC cost of observational against operational assigns has never been characterised, and a practitioner choosing between the two encodings has no criterion to go on. 
\lc{Retarget the CBMC contracts citation to the assigns clause page carrying the per write DFCC text.} \weiqinote{Addressed: \texttt{cbmc-contracts} now points at the assigns-clause page carrying the per-write DFCC description.}

This paper presents the complete function contract framework for \toolname{}. Its technical core is the Observational Frame Rule. Under the observational assigns semantics of JML and ACSL, the \texttt{frame\_enforcer} module emits $(G{-}M)+E$ VCCs per Enforce step, for $G$ tracked globals, $M$ assigns targets and $E$ postconditions, independent of the number of write instructions~$W$ in the body, and the same module serves both Enforce and call-site abstraction.
The frame rule and the Enforce/Replace decomposition are classical
(\S\ref{sec:related_work}); we claim neither. The contributions are the
following.

\marienote{each of these enumerated points should be shorter - try to condense the main point in each}
\begin{enumerate}
  \item \textbf{The contract stack for \toolname{}}~(\S\ref{sec:methodology}, \S\ref{sec:experiments}). The clause syntax, Enforce, Replace and \texttt{assigns} enforcement are implemented as one \texttt{frame\_enforcer} module and merged into \toolname{} \texttt{master}. We evaluate on synthetic benchmarks, SV-COMP tasks, and three real-world case studies: ANSSI X.509, FreeRTOS \texttt{heap\_1}, and the NIST FIPS-203 ML-KEM (Kyber) reference implementation, whose third-party \baseline{} contracts we reuse. \baseline{}/DFCC~\cite{kroening2014cbmc, cbmc-contracts} is the comparison baseline throughout.
  \item \textbf{Assigns encoding and complexity model}~(\S\ref{sec:assigns}, \S\ref{sec:model}). We show that the standard observational assigns semantics of JML/ACSL admits an efficient BMC encoding generating $(G{-}M)+E$ VCCs, independent of the write instruction count~$W$, versus DFCC's $20{+}W{+}E$. The crossover $G < 20{+}M{+}W$ then says when the observational encoding emits
fewer VCCs. We also give the breakeven call count~$K^*$ at which modular
verification amortises its one-time Enforce cost. \marienote{amortises? again, not a very intuitive word to use}
  \item \textbf{A bounded, machine-checked account}~(\S\ref{sec:pipeline}). We state the algebraic skeleton of the Enforce/Replace pipeline over an abstract state-transformer model and check it in Lean~4~\cite{moura2021lean}: five theorems and one proposition (Theorems~\ref{thm:enforce}--\ref{thm:hoare_frame}, Proposition~\ref{prop:ptr}) covering enforce-transformation correctness, call-site abstraction, observational assigns compliance, Hoare frame rule embedding, contract substitution, and the pointer extension. They are bounded realisations of the classical modular-soundness results inside a BMC encoding, and they compose into soundness relative to the unwinding bound, meaning no false negatives against monolithic verification at the same bound. That composition is Proposition~\ref{prop:relsound}, argued on paper from the Lean theorems rather than itself a Lean theorem. Remark~\ref{rem:lean-scope} states what the model leaves out: loops, the BMC encoding, and the C memory model. We corroborate the pipeline against the C++ implementation empirically via seeded-fault agreement (\S\ref{sec:ce}).
\end{enumerate}
\lc{Only \texttt{ptr} is Lean checked; \texttt{relsound}, \texttt{op$\Rightarrow$obs} are not, so fix ``two propositions in Lean''.} \weiqinote{Addressed: the claim is now exactly five theorems and one proposition mechanised in Lean~4. Proposition~\ref{prop:relsound} is explicitly described as the composed corollary \emph{argued on paper from the Lean theorems rather than itself a Lean theorem}, and Remark~\ref{rem:lean-scope} states what the model does not capture (loops, the BMC encoding, the C memory model).}
\lc{The constant 21 is CBMC 6.8.0 specific; state that the crossover is tool version dependent.} \weiqinote{Addressed: ``tool-agnostic'' is removed from the conclusion; the constant is now stated as specific to \baseline{}~6.8.0, with only the \emph{shape} (flat versus linear) claimed to transfer. The criterion is also re-scoped as a within-encoding VCC accounting rather than a cross-tool wall-clock predictor, since a VCC is not a comparable unit of solver work across the two backends.}
\marienote{make sure that you say everything is in a git repo somewhere and provide the link. It is good to point this out in the introduction, conclusion and maybe at least one other place in the paper.}

\section{Background}
\label{sec:background}

\subsection{Function Contracts: A Running Example}
\marienote{you have line numbers in the figure so use them when you are describing it in the text, this will make it easier for the reader to follow}
Function contracts are specified via built-in macros that are embedded in the function body. \baseline{} has long supported this through its DFCC pipeline~\cite{kroening2014cbmc, cbmc-contracts} which is inspired by Dynamic Frames~\cite{kassios2006dynamic}. \toolname{}
had no contract support before this line of work; the clause syntax,
both phases and the \texttt{assigns} enforcement are all introduced here. This paper contributes the complete framework. Listing~\ref{lst:compute} illustrates our running example: \texttt{compute()} clamps each element of an integer array to a range $[\mathit{lo}, \mathit{hi}]$, returns the accumulated sum, and records the processed count in a global \texttt{g\_processed} that it declares in its \texttt{assigns} clause, so the example exercises frame
checking as well as pre- and postconditions.

\begin{figure}[t]
\begin{lstlisting}[language=C]
int g_processed; /* global the contract lets compute() modify */

int compute(int *arr, int n, int lo, int hi) {
  __ESBMC_requires(
      __ESBMC_is_fresh(arr, n*sizeof(int))
      && n > 0 && lo >= 0 && lo <= hi
      && hi <= 100);
  __ESBMC_assigns(g_processed); /* frame: only g_processed */
  __ESBMC_ensures(
    __ESBMC_return_value >= n * lo &&
    __ESBMC_return_value <= n * hi);
  int sum = 0;
  for (int i = 0; i < n; i++) {
    int v = arr[i];
    if (v < lo) v = lo;
    if (v > hi) v = hi;
    sum += v;
  }
  g_processed = n; /* the only global write */
  return sum;
}
\end{lstlisting}
\caption{Running example: \texttt{compute()} with an \toolname{} contract. The \texttt{assigns(g\_processed)} clause exercises frame checking: the function
may modify \texttt{g\_processed} and no other global.}
\label{lst:compute}
\Description{A description of the image for accessibility.}
\end{figure}

Verification proceeds in two phases. \emph{Enforce}: \toolname{} generates a wrapper that assumes the precondition, calls the original body, then asserts the postcondition and assigns compliance. The loop is unrolled to the bound once, in isolation, instead of once per call
site. \emph{Replace}: at each call site, \toolname{} substitutes the body with a stub that havoces the declared assigns target (\texttt{g\_processed}) and assumes the postcondition; no loop unrolling occurs at the call site. The measured speedups for this example across loop bounds and call counts are reported in Section~\ref{sec:experiments} (Table~\ref{tab:rq2}). \marienote{in lots of places from here onwards you forward reference to other parts of the paper. In general this is a reasonable thing to do but you are doing it so much that the reader forgets what the current section they are reading is about because there are so many pointers to future parts. }

Table~\ref{tab:syntax} summarises the contract annotation syntax for both \toolname{} and \baseline{}, which we use as a reference comparison tool throughout the paper. \baseline{} requires a 3-step pipeline (\texttt{goto-cc} $\to$ \texttt{goto-instrument} $\to$ \texttt{cbmc}) for each enforce or replace step; \toolname{} uses a single invocation.

\begin{table}[t]
\small
\caption{Contract annotation syntax: \toolname{} vs.\ \baseline{}.}
\label{tab:syntax}
\begin{tabular}{lll}
\toprule
\textbf{Feature} & \textbf{\toolname{}} & \textbf{\baseline{}} \\
\midrule
Precondition  & \texttt{\_\_ESBMC\_requires(p)}  & \texttt{\_\_CPROVER\_requires(p)} \\
Postcondition & \texttt{\_\_ESBMC\_ensures(e)}   & \texttt{\_\_CPROVER\_ensures(e)} \\
Assigns       & \texttt{\_\_ESBMC\_assigns(t)}   & \texttt{\_\_CPROVER\_assigns(t)} \\
Return value  & \texttt{\_\_ESBMC\_return\_value} & \texttt{\_\_CPROVER\_return\_value} \\
Old value     & \texttt{\_\_ESBMC\_old(x)}       & \texttt{\_\_CPROVER\_old(x)} \\
\midrule
Enforce/Replace & Single \texttt{esbmc} call & 3-step pipeline \\
\bottomrule
\end{tabular}
\end{table}

\subsection{The Frame Problem in BMC}

In modular verification, when a function is replaced by its contract stub, the caller learns which memory locations may have changed. The assigns clause encodes this information: locations not listed are guaranteed to remain unchanged. Enforcing this guarantee is the \emph{frame problem} in program verification~\cite{reynolds2002separation, o2019separation}.
There are two natural encodings for assigns clause enforcement in BMC:

\paragraph{(1) Observational encoding.} Snapshot each global $v_i \notin S$ at function entry ($s_i \gets v_i$), then assert equality at exit ($v_i = s_i$). Total VCC count: $(G - M) + E$, independent of the number of write instructions~$W$ in the function body.

\paragraph{(2) Operational encoding (DFCC)} Before every write instruction in the body, test membership in the declared write-set via a runtime data structure. Total VCC count: $20 + W + E$ (20 library infrastructure VCCs plus one per write instruction plus one per postcondition). This is the approach taken by DFCC in \baseline{}~\cite{kroening2014cbmc, cbmc-contracts}.
\lc{Formula gives 23 at $E{=}1$, yet Table~\ref{tab:rq1} reports 22; state whether the 21 already includes ensures.} \weiqinote{This was the most valuable comment in the set, and the answer is that the \emph{formula} was wrong, not the table. Re-measuring \baseline{}~6.8.0 over a $W \times E$ sweep gives, for $W \ge 1$, exactly $20{+}W{+}E$, not $21{+}W{+}E$: $(W{=}1,E{=}0) \to 21$, $(W{=}1,E{=}1) \to 22$, $(W{=}2,E{=}1) \to 23$. The $W{=}0$ empty body sits at~16, off the line; the paper had been mistaking the $(W{=}1,E{=}0)$ point for the intercept. So Table~\ref{tab:rq1}'s 22 was right all along. Propagated throughout: the crossover is $G < 20{+}M{+}W$ and the empirical crossover point is $G{=}22$, not~23; Figure~\ref{fig:crossover} regenerated.}

The two encodings answer different questions. Observational asks: \emph{did any non-assigned location change by the time the function returned?} Operational asks: \emph{was any non-assigned location written during execution?} These differ only for the \emph{temporary-write} pattern, in which a write and restore before exit, which is irrelevant to the caller in a sequential setting (Section~\ref{sec:scope}). In this work we adopt observational semantics \marienote{say what you mean by `observational semantics' and why you adopt it}, which is the standard assigns semantics of JML~\cite{burdy2005overview} and ACSL~\cite{baudin2013acsl}, and is sound for sequential single-threaded C programs.

\section{Framework for Function Contract Verification}
\label{sec:semantics}

\subsection{Contract Model and Verification Pipeline}
\label{sec:pipeline}
\lc{Add an overview figure here; input C plus contract, through Enforce then Replace, to verdict.} \weiqinote{Addressed: Figure~\ref{fig:overview} is the end-to-end overview: annotated C source and its $K$ call sites at the top, the two phases and what each emits in the middle, and the verdict at the bottom, with inputs, outputs, and the per-phase VCC cost labelled. It is drawn in TikZ rather than imported, so the operation labels stay editable.}

\begin{figure}[t]
\centering
\begin{tikzpicture}[
  font=\scriptsize,
  box/.style={draw,rounded corners=1.5pt,inner sep=3pt,line width=.4pt,align=center},
  io/.style={box,fill=black!7,text width=68mm},
  ph/.style={box,fill=black!3,text width=33mm},
  ar/.style={-{Stealth[length=1.5mm]},line width=.45pt},
]
\node[io] (src) {annotated C: \texttt{f} with \texttt{requires}/\texttt{assigns}/\texttt{ensures},\\
  and a caller with $K$ call sites to \texttt{f}};
\coordinate (a) at ($(src.south)+(0,-4mm)$);
\draw[ar] (src.south) -- (a);
\node[ph,anchor=north east] (e) at ($(a)+(-1.5mm,-3mm)$) {\textbf{\textsc{Enforce}}\\
  verify \texttt{f}'s body once\\ against its contract};
\node[ph,anchor=north west] (r) at ($(a)+(1.5mm,-3mm)$) {\textbf{\textsc{Replace}}\\
  substitute each call site\\ with a contract stub};
\draw[ar] (a) -| (e.north);
\draw[ar] (a) -| (r.north);
\coordinate (b) at ($(e.south)!0.5!(r.south)$);
\node[io,below=5mm of b] (v) {verdict $+$ counterexample};
\coordinate (m) at ($(v.north)+(0,2.6mm)$);
\draw[line width=.45pt] (e.south) |- (m);
\draw[line width=.45pt] (r.south) |- (m);
\draw[ar] (m) -- (v.north);
\end{tikzpicture}
\Description{End-to-end overview of the contract pipeline: annotated C source,
the Enforce and Replace phases, and the verdict.}
\caption{The contract pipeline end to end. Enforce is paid once; Replace is
applied at each of the $K$ call sites, where it removes the loop unrolling that
makes monolithic verification scale as $O(N{\times}K)$.}
\label{fig:overview}
\end{figure}

\begin{remark}[What the Formal Model Does and Does Not Cover]
\label{rem:lean-scope}
Our Lean~4 development reasons at the level of \emph{state transformers}: the
model is $\mathit{State} = (\mathit{vars}: \mathsf{String} \to \mathbb{Z},\,
\mathit{is\_err}: \mathbb{B})$ and an implementation $\mathit{impl}$ is an
\emph{abstract} total function $\mathit{State} \to \mathit{State}$. The theorems
below are therefore parametric in $\mathit{impl}$: they establish the algebraic
correctness of the Enforce/Replace \emph{transformation}, independently of how
$\mathit{impl}$ is realised. Two things are deliberately outside this
model. First, the model
does not formalise loop unrolling: BMC discharges the enforce check only up to an
unwinding bound (Section~\ref{sec:bounded}), so the end-to-end guarantee is
bounded, not unbounded as in deductive verification. Second, the Lean proofs
validate the \emph{algorithm}, not the \toolname{} C\texttt{++} code that
implements it; we do not claim the implementation is verified. Nothing in the
development refers to that code, and the code is not extracted from it. What
connects the two is the seeded-fault agreement of Section~\ref{sec:ce}, which
is a test rather than a proof. The scalar model
covers integer-typed globals; the pointer-aliasing extension
(Proposition~\ref{prop:ptr}) is machine-checked separately via an abstract
\texttt{Address} type.
\end{remark}

We model programs as state transformers over this memory model. A \emph{contract}
is a triple $(P, Q, S)$, where $P : \mathit{State} \to \mathbb{B}$ is the
precondition, $Q : \mathit{State} \to \mathit{State} \to \mathbb{B}$ is the
postcondition (relating pre- and post-states), and $S \subseteq \mathit{Vars}$ is
the assigns set, the variables that the function may modify.

\marienote{various formulas and text is running into the margin, be sure to fix this when you can}
The \emph{Enforce/Replace pipeline} decomposes modular verification into two phases:
\begin{itemize}
  \item[1.] \textbf{Enforce}: generate a wrapper $\mathit{wrap}(\mathit{impl}, P, Q, s_0)$ \marienote{this function needs a 1-sentence explanation/justification because it kind of seems to come out of nowhere} that assumes $P$, calls the real implementation, then asserts $Q$ and assigns compliance.
  \item[2.] \textbf{Replace}: at each call site, substitute $\mathit{impl}$ with a stub $\mathit{abs}(P, Q, s_0)$, defined as $\{\,s \mid P(s_0) \wedge Q(s_0, s)\,\}$, that havoces only the declared assigns targets and assumes $Q$. The implementation body is never examined.
\end{itemize}

\begin{theorem}[Enforce Transformation Soundness]
\label{thm:enforce}
Let $\mathit{wrap}(\mathit{impl},\allowbreak P,\allowbreak Q,\allowbreak s_0)$ run $\mathit{impl}(s_0)$ under
precondition $P$ and flag an error if $Q(s_0, \cdot)$ fails. For any $s_0$, with
$s_f = \mathit{wrap}(\mathit{impl},\allowbreak P,\allowbreak Q,\allowbreak s_0)$:
\[ P(s_0) \;\wedge\; \neg s_f.\mathit{is\_err} \;\Longrightarrow\; Q(s_0,\, s_f) \]
That is, whenever the wrapper raises no error, the postcondition holds. The wrapper is a
faithful checker of contract compliance, parametric in $\mathit{impl}$. \emph{Lean~4}: \texttt{phase1\_enforcement\_sound}. \marienote{explain the notation that you are using, also make sure you clarify what `\emph{Lean~4}: \texttt{phase1\_enforcement\_sound}' refers to. My understanding is that this is the label in the corresponding Lean proof but you need to make this clear for a reader who is not familiar with Lean.}
\end{theorem}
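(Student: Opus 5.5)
The plan is to fix a concrete definition of $\mathit{wrap}$ and then prove the implication by unfolding that definition and splitting on the two guards the wrapper evaluates. The statement is parametric in $\mathit{impl}$, so no property of the implementation is needed. We take
\[
\mathit{wrap}(\mathit{impl},P,Q,s_0) \;=\;
\begin{cases}
s_0 & \text{if } \neg P(s_0),\\
s_1 & \text{if } P(s_0) \wedge Q(s_0,s_1),\\
s_1[\mathit{is\_err}\mapsto \mathit{true}] & \text{if } P(s_0) \wedge \neg Q(s_0,s_1),
\end{cases}
\]
where $s_1 = \mathit{impl}(s_0)$. The first branch models the \emph{assume}: an execution that violates the precondition is pruned, and its output state is irrelevant. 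The second and third branches model the \emph{assert}. This mirrors the Lean \texttt{if}-\texttt{then}-\texttt{else} structure, and the proof \texttt{phase1\_enforcement\_sound} should be a short case analysis over it.

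Assume $P(s_0)$ and $\neg s_f.\mathit{is\_err}$, and set $s_1 = \mathit{impl}(s_0)$. Because $P(s_0)$ holds, the first branch is ruled out. We then split on whether $Q(s_0,s_1)$ holds.
\begin{itemize}
\item If $Q(s_0,s_1)$ holds, then $s_f = s_1$ and the goal $Q(s_0,s_f)$ is exactly this hypothesis.
\item If $\neg Q(s_0,s_1)$, then $s_f = s_1[\mathit{is\_err}\mapsto\mathit{true}]$, so $s_f.\mathit{is\_err} = \mathit{true}$. This contradicts $\neg s_f.\mathit{is\_err}$, and the case closes by contradiction.
\end{itemize}
In Lean this is \texttt{unfold wrap; simp [hP]; split} followed by \texttt{exact} in the first case and \texttt{simp at hErr} in the second.

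The main subtlety concerns the state $Q$ is evaluated on. It must be the state \emph{returned} by the wrapper, not merely $\mathit{impl}(s_0)$. In the second case these coincide, because the wrapper returns $s_1$ unchanged. The third case would need $Q$ to be insensitive to the $\mathit{is\_err}$ field if it were ever used positively, but there it is only used to derive a contradiction, so no such insensitivity is required.

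If the wrapper also asserts assigns compliance, i.e.\ it sets the error flag when some $v\notin S$ satisfies $s_1.\mathit{vars}(v)\neq s_0.\mathit{vars}(v)$, this adds one more error branch. That branch is discharged by the same contradiction, so the implication still holds and the proof is unchanged in shape. I expect no genuine obstacle beyond making the definition of $\mathit{wrap}$ explicit so that the case split is exhaustive.
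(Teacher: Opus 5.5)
Your proposal is correct and takes essentially the paper's approach: your three-branch $\mathit{wrap}$ is exactly the Lean \texttt{contract\_wrapper} of Figure~\ref{fig:lean_formal}, and \texttt{phase1\_enforcement\_sound} comes down to the same unfold-and-split on $Q(s_0,\mathit{impl}(s_0))$, with the failing branch closed by contradiction with $\neg s_f.\mathit{is\_err}$. The assigns-compliance branch you add is not in the paper's Lean wrapper, which checks only the postcondition, but as you say it would be discharged the same way.
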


\noindent
We stress what Theorem~\ref{thm:enforce} does not say. Because
$\mathit{impl}$ is an abstract function, the theorem certifies the \emph{wrapper
construction}; it does not by itself establish that a finite-bound BMC run
discharges $\neg s_f.\mathit{is\_err}$ for every input when $\mathit{impl}$
contains loops. That bridge, from the algebraic statement to what a bounded run actually proves, is made precise in Section~\ref{sec:bounded}.

\begin{theorem}[Replace Soundness]
\label{thm:replace}
For any property $\mathcal{P}$:
\[ \bigl(\forall s.\; Q(s_0, s) \Rightarrow \mathcal{P}(s)\bigr) \;\Longrightarrow\; \forall s \in \mathit{abs}(P, Q, s_0).\; \mathcal{P}(s) \]
The stub ranges over every state satisfying $Q$, a superset of the
states the real implementation can reach, so any property provable against the
stub holds for the real function. This is what lets Replace-mode BMC reason about call
sites without re-examining the body. \emph{Lean~4}: \texttt{phase2\_abstraction\_sound}.
\end{theorem}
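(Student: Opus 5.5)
The proof is a direct unfolding of the definition of the stub followed by one application of the hypothesis. No induction or appeal to earlier theorems is needed. Fix $s_0$ and a property $\mathcal{P}$, and assume the premise $\forall s.\; Q(s_0,s) \Rightarrow \mathcal{P}(s)$. Now take an arbitrary $s \in \mathit{abs}(P,Q,s_0)$. By the definition of the Replace stub, $\mathit{abs}(P,Q,s_0) = \{\,s \mid P(s_0) \wedge Q(s_0,s)\,\}$, so membership yields the conjunction $P(s_0) \wedge Q(s_0,s)$. We discard the first conjunct and instantiate the premise at this $s$ with the second conjunct, which gives $\mathcal{P}(s)$. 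In Lean~4 we expect this to be a few tactic steps: \texttt{intro h s hs}, destructure \texttt{hs} into \texttt{⟨\_, hq⟩}, and close with \texttt{exact h s hq}.

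Two formal points need care rather than insight. First, $\mathit{abs}$ must be represented as a predicate $\mathit{State} \to \mathrm{Prop}$ (or a \texttt{Set State}), so that membership reduces definitionally to the defining conjunction. If it is instead a \texttt{def} returning a set, we insert a \texttt{simp [abs]} or \texttt{unfold} step before destructuring. Second, the statement as written does not mention the havoc restriction to $S$. The stub described in the text havocs only the assigns targets, and a model that encoded this would add a conjunct such as $\forall v \notin S.\; s.\mathit{vars}\,v = s_0.\mathit{vars}\,v$ to the set-builder. In that case the proof is unchanged: we project out the $Q$ component and ignore the frame conjunct, because a stronger membership condition only shrinks the set.

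The main obstacle is interpretive, not technical. The informal gloss claims that the stub covers a superset of the states the real implementation can reach. That claim is not part of the displayed implication. It holds only when combined with Theorem~\ref{thm:enforce}: if $P(s_0)$ holds and the wrapper raises no error, then $s_f$ satisfies $Q(s_0,s_f)$, and hence $s_f \in \mathit{abs}(P,Q,s_0)$. I would state this as an immediate corollary after the proof, chaining the two theorems, so that "any property provable against the stub holds for the real function" is justified explicitly rather than read off Theorem~\ref{thm:replace} alone.
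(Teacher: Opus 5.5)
Your proof is correct and is the same argument as the paper's: the paper gives no prose proof and defers to \texttt{phase2\_abstraction\_sound}, and since \texttt{abstract\_transition} is defined as the predicate $s \mapsto P(s_0) \wedge Q(s_0,s)$, that theorem comes down to the unfold--project--apply step you describe. Your closing corollary is what the paper states as Theorem~\ref{thm:substitution}, which takes $Q(s_0,\mathit{impl}(s_0))$ as a hypothesis directly. If you route it through Theorem~\ref{thm:enforce} instead, add one observation: under $P(s_0)$ with no error raised, the wrapper returns $\mathit{impl}(s_0)$ itself, so $s_f$ is the implementation's output and not merely a state satisfying $Q$.
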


\begin{theorem}[Contract Substitution]
\label{thm:substitution}
If $\mathit{impl}$ satisfies its contract at $s_0$ and a property $\mathcal{P}$
holds for every state reachable via the stub, then $\mathcal{P}(\mathit{impl}(s_0))$:
\begin{align*}
  & P(s_0)  \wedge\; Q(s_0, \mathit{impl}(s_0)) \\
  & \wedge\; \bigl(\forall s' \in \mathit{abs}(P,Q,s_0).\;\mathcal{P}(s')\bigr) \;\Longrightarrow\; \mathcal{P}(\mathit{impl}(s_0))
\end{align*}
\emph{Lean~4}: \texttt{contract\_substitution\_sound}.
\end{theorem}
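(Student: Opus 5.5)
The proof is a one-step instantiation of the definition of the stub. The only fact needed about $\mathit{abs}$ is its membership criterion, $s \in \mathit{abs}(P,Q,s_0) \iff P(s_0) \wedge Q(s_0,s)$, which holds by definitional unfolding. We do not need Theorem~\ref{thm:replace} directly, since we already have the stronger pointwise hypothesis $\forall s' \in \mathit{abs}(P,Q,s_0).\,\mathcal{P}(s')$. We only need to show that the concrete post-state $\mathit{impl}(s_0)$ is one of those $s'$.

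Concretely, set $s' := \mathit{impl}(s_0)$. Membership $s' \in \mathit{abs}(P,Q,s_0)$ reduces by the definition above to the conjunction $P(s_0) \wedge Q(s_0,\mathit{impl}(s_0))$. These are exactly the first two hypotheses of the statement, so membership holds. Specialising the third hypothesis at $s'$ then yields $\mathcal{P}(\mathit{impl}(s_0))$. In Lean~4 this is the term \texttt{h3 (impl s0) ⟨h1, h2⟩}, possibly after a \texttt{simp [abs]} or \texttt{show} to unfold the set-builder membership. Since $\mathit{impl}$ is a total function $\mathit{State}\to\mathit{State}$ (Remark~\ref{rem:lean-scope}), no reachability or termination side condition arises.

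The only point needing care, and it is bookkeeping rather than mathematics, is the representation of $\mathit{abs}$. If it is a \texttt{Set State} defined by comprehension, membership is definitionally the predicate and unfolds immediately. If instead the stub is modelled operationally, for example as a havoc of the variables in $S$ followed by \texttt{assume} $Q$, then membership must additionally record that $\mathit{impl}(s_0)$ agrees with $s_0$ outside $S$. That frame fact is not among the hypotheses as stated. In that case the plan is to add the observational frame condition as a premise, or to derive it from a compliance statement (the observational-assigns theorem), and then proceed as above. With the set-builder definition given in the text, the direct argument suffices.

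Finally, we note how this combines with Theorem~\ref{thm:enforce}. The hypothesis $Q(s_0,\mathit{impl}(s_0))$ is what a non-erroring Enforce run supplies. The resulting chain Enforce $\Rightarrow$ membership in the stub $\Rightarrow$ property is the substitution principle used by Proposition~\ref{prop:relsound}.
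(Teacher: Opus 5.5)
Your proof is correct and matches the paper's: the Lean \texttt{abstract\_transition} is literally the predicate $\lambda s.\, P(s_0) \wedge Q(s_0,s)$, so \texttt{contract\_substitution\_sound} follows by instantiating the third hypothesis at $\mathit{impl}(s_0)$ with the pair formed from the first two. Your caveat about a havoc-based stub is also apt: the paper's $\mathit{abs}$ carries no frame constraint, so no extra premise is needed.
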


\subsection{Bounded Soundness of the Pipeline}
\label{sec:bounded}

Theorems~\ref{thm:enforce}--\ref{thm:substitution} are statements about an
abstract $\mathit{impl}$. We now state precisely what they buy when $\mathit{impl}$
is verified by \emph{bounded} model checking with unwinding bound~$b$.

A BMC enforce run does not establish $\neg s_f.\mathit{is\_err}$ for all inputs
under arbitrarily many loop iterations; it establishes it for all executions that
unroll within~$b$. \toolname{} emits an \emph{unwinding assertion} at each loop:
when these pass, no execution exceeds~$b$ and the enforce guarantee is
unconditional for that program; this is exactly the case for the
statically-bounded loops that dominate verification-oriented C (e.g.\ the
fixed $256$-coefficient loops of the ML-KEM case study in
Section~\ref{sec:mlkem}, or arrays bounded for BMC as in
Section~\ref{sec:svcomp}). When an unwinding assertion cannot be discharged, the
guarantee is explicitly \emph{up to depth~$b$}, as for BMC in general.

The contribution of the Lean theorems is then a \emph{relative-soundness} \marienote{need to explain what you mean by relative soundness, relative to what?}
guarantee, which is the property a modular verifier needs:

\begin{proposition}[No false negatives relative to the bound]
\label{prop:relsound}
Fix an unwinding bound~$b$ and suppose the Enforce phase reports success at~$b$.
Then for any caller property $\mathcal{P}$ over the non-assigned frame, every
counterexample that monolithic (inlined) BMC finds at~$b$ is also found by
Replace-mode BMC at~$b$: modularisation introduces no false negatives. Because the
stub over-approximates the implementation (it ranges over all states
satisfying~$Q$, not only the reachable ones), the converse need not hold: Replace
may additionally report a violation that the inlined program does not exhibit when
the contract is weaker than the implementation. Modular verification is thus sound
(no missed bugs) but not complete (a too-weak contract can raise a false alarm).
\end{proposition}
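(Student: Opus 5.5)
The argument is a composition of Theorems~\ref{thm:enforce}, \ref{thm:replace} and~\ref{thm:substitution}, carried out one call site at a time and then lifted to the whole caller. A counterexample found by monolithic BMC at bound~$b$ is a concrete trace through the caller in which each of the $K$ calls to $\mathit{impl}$ starts in some state $s_0^{(i)}$ and ends in $\mathit{impl}(s_0^{(i)})$, with every loop unrolled within~$b$, and which ends in a state violating~$\mathcal{P}$. The goal is to show that this exact trace, with each inlined body replaced by one transition of the stub, is also a trace of the Replace-mode program. If it is, Replace-mode BMC explores it and reports the same violation.

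\textbf{Single call site.} Fix a call with entry state $s_0$. The caller's precondition obligation, which Replace asserts at the call site, gives $P(s_0)$; if this check fails, Replace already reports an error, so that case is immediate. Enforce succeeded at~$b$, so for every input whose execution unrolls within~$b$ (and, by the discussion in Section~\ref{sec:bounded}, for every input when the unwinding assertions pass) we have $\neg s_f.\mathit{is\_err}$. Theorem~\ref{thm:enforce} then yields $Q(s_0,\mathit{impl}(s_0))$. The observational frame check in the wrapper also gives agreement of $\mathit{impl}(s_0)$ and $s_0$ outside~$S$. Together these say $\mathit{impl}(s_0)\in\mathit{abs}(P,Q,s_0)$: the stub havocs only~$S$ and then assumes~$Q$, so the concrete post-state is one of the values it may choose. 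This is the contrapositive reading of Theorem~\ref{thm:substitution}: if $\mathcal{P}$ fails at $\mathit{impl}(s_0)$, then $\mathcal{P}$ cannot hold on all of $\mathit{abs}(P,Q,s_0)$, so Replace finds a violating state.

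\textbf{Whole trace.} Induct on the number of call sites along the trace. The induction hypothesis is that the prefix up to call~$i$ is reproducible in Replace mode with identical states. The code between calls is shared verbatim by both programs, so the step cases are the single-call argument above. At the end, the violating final state is reachable in Replace mode, which gives no false negatives. For the incompleteness half, I would exhibit a small witness: $\mathit{impl}$ sets $x\mathrel{:=}0$, the contract has $Q=\mathit{true}$ and $S=\{x\}$, and the caller asserts $x=0$. Replace can havoc $x$ to a nonzero value, while inlining cannot.

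\textbf{Main obstacle.} The hardest step is the bridge from the bounded Enforce run to the hypothesis $\neg s_f.\mathit{is\_err}$ used in Theorem~\ref{thm:enforce} for the particular $s_0$ reached in the monolithic trace. This needs two things. First, the nondeterministic Enforce inputs must cover every reachable call-site state satisfying~$P$, which requires matching the environment model, including fresh pointers and globals. Second, the monolithic execution of the body at~$b$ must itself lie within the depth~$b$ that Enforce checked, since loop unrolling is shared. I would argue both points on paper: the inlined body and the Enforce body are the same code unrolled to the same~$b$, and any state reaching the call satisfies $P$, which Enforce quantifies over. I would state this explicitly as an assumption, since it falls outside the Lean model (Remark~\ref{rem:lean-scope}).
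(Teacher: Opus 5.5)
Your proposal is correct and follows the paper's route. The paper's whole argument is the composition of Theorem~\ref{thm:enforce} (a successful Enforce at~$b$ puts $\mathit{impl}(s_0)$ among the $Q$-states) with Theorem~\ref{thm:substitution} (Replace checks $\mathcal{P}$ against all of those states), and that is your single-call-site step. Your additions make explicit what the paper's composition leaves implicit: the induction that lifts the step to all $K$ call sites of a trace; discharging the $P(s_0)$ hypothesis of Theorem~\ref{thm:substitution} through the call-site precondition assertion; invoking the frame check because the implemented stub havocs only~$S$, whereas the Lean $\mathit{abs}$ leaves the frame unconstrained (so on the tool that step is only as strong as the snapshot set, cf.\ \S\ref{sec:threats}); and stating the input-coverage and unwinding bridge as an assumption outside the Lean model.
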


\noindent
This follows by composing Theorems~\ref{thm:enforce} and~\ref{thm:substitution}:
Enforce establishes (at~$b$) that $\mathit{impl}$ is a state satisfying $Q$;
Replace verifies $\mathcal{P}$ against all states satisfying $Q$, a
superset. Hence any bug direct BMC would find at bound~$b$ is preserved by the
modular pipeline at the same bound; the speedups in
Section~\ref{sec:experiments} reduce solver work without weakening the
guarantee (Enforce and Direct use the same~$b$ throughout). The one-directional
nature of this guarantee, and the false alarm a weak contract can raise, are both
corroborated empirically in Section~\ref{sec:ce} on the actual implementation.

To make the formalisation auditable from the paper itself,
Figure~\ref{fig:lean_formal} reproduces the core verbatim: the state model, the
Enforce wrapper with its soundness theorem, and the substitution theorem that
composes into Proposition~\ref{prop:relsound}. The development is 237 lines of
Lean~4 and contains \textbf{no \texttt{sorry}, no \texttt{admit}, and no added
\texttt{axiom}} (only the standard classical-logic instance); ``zero errors''
here means zero proof obligations left open, not merely a clean compile.
\lc{State the nontrivial formal content; over abstract \texttt{impl} these theorems risk reading as restatements.} \weiqinote{Partially addressed. Remark~\ref{rem:lean-scope} now states what the abstract model does and does not carry, and \S\ref{sec:ce} frames the seeded-fault study as the empirical bridge from the Lean model to the C++ implementation. The model itself is not strengthened. I think this is the weakest remaining part of the paper and would value your view on whether the scoped version is defensible.}

\begin{figure}[t]
\begin{lstlisting}[language=ML,basicstyle=\ttfamily\scriptsize]
structure State where           -- the whole memory model
  vars   : String -> Int        -- global integer store
  is_err : Bool                 -- error flag

-- Enforce: assume req, run impl, flag error if ens fails
def contract_wrapper (impl req ens) (s0) : State :=
  if req s0 then (let s1 := impl s0;
       if ens s0 s1 then s1 else {s1 with is_err := true})
  else s0

theorem phase1_enforcement_sound :        -- Enforce soundness
  (req s0 /\ not (contract_wrapper impl req ens s0).is_err)
    -> ens s0 (contract_wrapper impl req ens s0)

-- Replace stub: all states satisfying the postcondition
def abstract_transition (req ens) (s0) : State -> Prop :=
  fun s => req s0 /\ ens s0 s

theorem contract_substitution_sound :     -- composes to Prop 1
  req s0 -> ens s0 (impl s0)
    -> (forall s', abstract_transition req ens s0 s' -> P s')
    -> P (impl s0)
\end{lstlisting}
\caption{Verbatim core of the Lean~4 development (\texttt{ContractSoundness.lean};
no \texttt{sorry}/\texttt{admit}/\texttt{axiom}).
\texttt{phase1\_enforcement\_sound} is Enforce soundness;
\texttt{contract\_substitution\_sound} composes with it to yield the
relative-soundness guarantee of Proposition~\ref{prop:relsound}.}
\label{fig:lean_formal}
\end{figure}

\subsection{Assigns Clause Encoding}
\label{sec:assigns}

A contract's assigns set, $S$, constrains which memory locations the function may modify. Within the pipeline, this constraint must be formalised and checked. We present two natural encodings; our framework adopts the observational one. In the following, a \emph{program memory state} $\sigma$ is a total function from memory locations~$\ell$ (variables and heap cells) to values; $\sigma(\ell)$ denotes the value stored at location~$\ell$ in state~$\sigma$. $S$ is the declared assigns set.
\begin{definition}[Observational Assigns Semantics]
\label{def:observational}
A function $f$ satisfies $\mathsf{assigns}(S)$ \emph{observationally} iff:
\[ \forall \ell \notin S.\;\; \sigma_{\mathit{post}}(\ell) = \sigma_{\mathit{pre}}(\ell) \]
This is the standard semantics of assigns/modifies clauses in JML~\cite{burdy2005overview} and ACSL~\cite{baudin2013acsl}: only the \emph{final} state is compared to the \emph{initial} state.
\end{definition}



\begin{definition}[Operational Assigns Semantics]
\label{def:dfcc}
A function $f$ satisfies $\mathsf{assigns}(S)$ \emph{operationally} iff:
\[ \forall (\ell := v) \in \mathit{Writes}(f).\;\; \ell \in S \]
where $\mathit{Writes}(f)$ is the multiset of write instructions executed. This is the semantics that is enforced by Dynamic Frame Condition Checking (DFCC).
\end{definition}

\begin{proposition}
\label{prop:op-implies-obs}
Operational $\Rightarrow$ Observational, but not vice versa.
\end{proposition}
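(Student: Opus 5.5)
The plan has two parts: a direct argument for the implication and a counterexample for the converse. Both work over the memory-state model fixed just before Definition~\ref{def:observational}. An execution of $f$ is a finite sequence of states $\sigma_{\mathit{pre}} = \sigma_0, \sigma_1, \dots, \sigma_n = \sigma_{\mathit{post}}$. Each step $\sigma_{i} \to \sigma_{i+1}$ is either a non-writing step, with $\sigma_{i+1} = \sigma_i$ on all locations, or a write $(\ell_i := v_i) \in \mathit{Writes}(f)$, with $\sigma_{i+1} = \sigma_i[\ell_i \mapsto v_i]$.

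The first fact to isolate is this frame property of single steps. A step can change the value at a location $\ell$ only if it is a write to $\ell$ itself. This is simply the definition of update on total functions, so $\sigma_i[\ell_i\mapsto v_i](\ell) = \sigma_i(\ell)$ whenever $\ell \neq \ell_i$.

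\textbf{Operational $\Rightarrow$ Observational.} Assume Definition~\ref{def:dfcc}, so every executed write has $\ell_i \in S$. Fix $\ell \notin S$. Then $\ell \neq \ell_i$ for every write step, so by the frame property $\sigma_{i+1}(\ell) = \sigma_i(\ell)$ for every $i$. An induction on $i$ gives $\sigma_i(\ell) = \sigma_0(\ell)$ for all $i \le n$. At $i = n$ this yields $\sigma_{\mathit{post}}(\ell) = \sigma_{\mathit{pre}}(\ell)$. Since $\ell \notin S$ was arbitrary, Definition~\ref{def:observational} holds. The argument is pointwise in $\ell$ and needs no assumption on the order of writes. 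It also does not depend on $\mathit{Writes}(f)$ being a multiset rather than a set.

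\textbf{Not vice versa.} I would exhibit the temporary-write pattern already flagged in the background section. Take a global $x \notin S$ and a body consisting of two writes, \texttt{t = x; x = x + 1; x = t;}, with the local \texttt{t} counted in $S$ or treated as a stack location outside the tracked globals. The write $x := \sigma_0(x)+1$ lies in $\mathit{Writes}(f)$ with $x \notin S$, so the operational condition fails. Yet $\sigma_{\mathit{post}}(x) = \sigma_0(x)$, and no other location outside $S$ is touched, so the observational condition holds. An even simpler witness is the single write $x := \sigma_0(x)$, which leaves the value unchanged but is still an executed write outside $S$.

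The main obstacle is not the induction but fixing the model precisely enough that ``execution'' is a well-defined trace of single-location updates. For Definition~\ref{def:dfcc} to be meaningful, every state change must be attributable to a recorded write; otherwise a location could change without appearing in $\mathit{Writes}(f)$. I would state this explicitly as the sequential, single-threaded assumption, matching the scope of Section~\ref{sec:scope}: no concurrent writers and no aliasing hidden from the write log. Pointer writes then count as writes to the resolved location $\ell_i$. With that assumption made explicit, both directions are routine.
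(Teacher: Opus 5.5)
Your proof is correct and takes the same route as the paper. The forward direction is the paper's ``no write outside $S$ touches $\ell$, so $\ell$ keeps its value'' argument, made rigorous as an induction over a trace of single-location updates, with the assumption that every state change is a recorded write to a resolved location stated explicitly where the paper leaves it implicit; the converse uses the same temporary-write counterexample. One small point in your favour is that your witnesses (\texttt{t = x; x = x + 1; x = t;} with \texttt{t} in $S$, or just \texttt{x = x;}) satisfy the observational condition from every pre-state, whereas the paper's $h := 99;\ h := 0$ does so only when $\sigma_{\mathit{pre}}(h) = 0$, which its proof assumes without saying so.
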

\begin{proof}
($\Rightarrow$) Suppose $f$ satisfies $\mathsf{assigns}(S)$ operationally. Then every $(\ell := v) \in \mathit{Writes}(f)$ has $\ell \in S$. For any $\ell \notin S$, no write instruction in the execution of $f$ modifies $\ell$; therefore the value of $\ell$ cannot change, giving $\sigma_{\mathit{post}}(\ell) = \sigma_{\mathit{pre}}(\ell)$. Hence $f$ satisfies $\mathsf{assigns}(S)$ observationally.
\lc{Holds for resolved lvalues only; note aliasing writes are handled by Proposition~\ref{prop:ptr}.} \weiqinote{Addressed: the proof now notes that this direction holds for resolved lvalues, with aliased writes handled by Proposition~\ref{prop:ptr}.}

($\not\Leftarrow$) Let $h \notin S$ and consider $f$ that executes $h := 99;\; h := 0$. At function exit $\sigma_{\mathit{post}}(h) = 0 = \sigma_{\mathit{pre}}(h)$, so $f$ satisfies observational $\mathsf{assigns}(S)$. However, $(h := 99) \in \mathit{Writes}(f)$ with $h \notin S$, so $f$ violates operational $\mathsf{assigns}(S)$. Thus the implication is strict.
\end{proof}

The observational encoding is sufficient for sequential programs because the caller observes memory \emph{only at function return}; writes that are undone before return are invisible to the caller and to the postcondition. The following two theorems formalise this and establish that the snapshot-and-assert mechanism is a sound BMC implementation of the classical Hoare frame rule.

\begin{theorem}[Observational Frame Rule]
\label{thm:frame}
Let $\mathit{havoc}(s, S) = \{\,s' \mid \forall v \notin S \cdot s'.\mathit{vars}(v) = s.\mathit{vars}(v)\,\}$. If $\mathcal{P}$ observes only variables outside $S$, i.e.\
\[ \forall s_1, s_2 \cdot \bigl(\forall v \notin S \cdot s_1.\mathit{vars}(v) = s_2.\mathit{vars}(v)\bigr) \Rightarrow \bigl(\mathcal{P}(s_1) \leftrightarrow \mathcal{P}(s_2)\bigr) \]
then $\mathcal{P}(s) \Rightarrow \forall s' \in \mathit{havoc}(s, S) \cdot \mathcal{P}(s')$. \emph{Lean~4}: \texttt{phase3\_havoc\_sound}.
\end{theorem}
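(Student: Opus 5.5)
The proof is a direct unfolding of definitions; no earlier theorem is needed. Fix $s$ with $\mathcal{P}(s)$ and an arbitrary $s' \in \mathit{havoc}(s, S)$. By the definition of $\mathit{havoc}$, membership gives the agreement hypothesis $\forall v \notin S \cdot s'.\mathit{vars}(v) = s.\mathit{vars}(v)$. This is exactly the antecedent of the locality assumption on $\mathcal{P}$, instantiated with $s_1 := s'$ and $s_2 := s$, or in the other order after using symmetry of equality. We therefore obtain $\mathcal{P}(s') \leftrightarrow \mathcal{P}(s)$, and the backward direction together with $\mathcal{P}(s)$ yields $\mathcal{P}(s')$. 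Since $s'$ was arbitrary, the universal conclusion follows.

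In Lean~4 the proof follows the same structure:
\begin{enumerate}
  \item Introduce $s$, the hypothesis $\mathcal{P}(s)$, $s'$, and the membership proof $h$. Because $\mathit{havoc}$ is a set-builder, \texttt{Set.mem\_setOf\_eq} reduces $h$ to the pointwise agreement statement.
  \item Apply the locality hypothesis to $s'$, $s$ and $h$.
  \item Close the goal with \texttt{.mpr} applied to $\mathcal{P}(s)$.
\end{enumerate}
The only bookkeeping is to state the agreement with the orientation the locality hypothesis expects. If that orientation is wrong, one rewrites with \texttt{eq\_comm} pointwise, or simply instantiates the hypothesis with the states swapped.

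The main obstacle is not technical but one of scope. The theorem says nothing about the $\mathit{is\_err}$ field. The paper's $\mathit{havoc}$ constrains only $\mathit{vars}$, so $s'$ may carry an arbitrary error flag, and $\mathcal{P}$ must therefore be insensitive to $\mathit{is\_err}$ as well. This is already guaranteed: the locality hypothesis quantifies over all $s_1, s_2$ that agree on $\mathit{vars}$ outside $S$, so it forces $\mathcal{P}$ to ignore $\mathit{is\_err}$ and the variables in $S$. The proof needs no extra case split.

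I would remark after the proof why this is the frame rule. Replace havocs exactly $S$ (Theorem~\ref{thm:replace}), and observational compliance (Definition~\ref{def:observational}) guarantees that the real post-state lies in $\mathit{havoc}(s_0, S)$. Any caller property framed outside $S$ is therefore preserved across the call.
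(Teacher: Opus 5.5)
Your proof is correct and is essentially the argument behind the paper's machine-checked \texttt{phase3\_havoc\_sound}: membership in $\mathit{havoc}(s,S)$ is exactly the agreement antecedent of the locality hypothesis, and $\mathcal{P}(s)$ is transported across the resulting biconditional, just as the displayed \texttt{hoare\_frame\_rule} proof does with \texttt{.symm} and \texttt{.mp}. One minor correction, to your closing remark only: Theorem~\ref{thm:replace} does not say that Replace havocs exactly $S$, since its stub $\mathit{abs}(P,Q,s_0)$ is just $\{s \mid P(s_0) \wedge Q(s_0,s)\}$; restricting the havoc to $S$ is a feature of the implementation's Replace mode, and it is the present theorem that justifies that restriction.
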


\begin{theorem}[Hoare Frame Rule Embedding]
\label{thm:hoare_frame}

Define \[
\mathit{hoare}(P, C, Q) \triangleq \forall s \cdot P(s) \Rightarrow Q(C(s))
\]. $C$ satisfies observational $\mathsf{assigns}(S)$ iff $\forall s, v \notin S \cdot (C\,s).\mathit{vars}(v) = s.\mathit{vars}(v)$ (dot notation denotes record field access, following Lean convention). A frame property $F$ is \emph{independent of}~$S$ iff $\forall s_1, s_2 \cdot (\forall v \notin S \cdot s_1.\mathit{vars}(v) = s_2.\mathit{vars}(v)) \Rightarrow (F(s_1) \leftrightarrow F(s_2))$. Then:
\[ \mathit{hoare}(P,\,C,\,Q) \;\wedge\; \mathsf{assigns}(S) \;\wedge\; F \perp S \;\Longrightarrow\; \mathit{hoare}(P \wedge F,\; C,\; Q \wedge F) \]
\emph{Lean~4}: \texttt{hoare\_frame\_rule}.
\end{theorem}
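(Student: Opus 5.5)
The proof is a direct unfolding of the definitions: no induction is needed, because $C$ is a total state transformer $\mathit{State}\to\mathit{State}$ (Remark~\ref{rem:lean-scope}). Fix an arbitrary state $s$ and assume $P(s) \wedge F(s)$. The goal is $Q(C(s)) \wedge F(C(s))$, and I will prove the two conjuncts separately.

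For the first conjunct, instantiate the hypothesis $\mathit{hoare}(P,C,Q)$ at $s$. This gives $P(s) \Rightarrow Q(C(s))$, and the assumed $P(s)$ discharges it. The frame hypotheses play no role here.

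The second conjunct is where the assigns clause and independence interact. First instantiate $\mathsf{assigns}(S)$ at $s$. This yields $\forall v \notin S.\; (C\,s).\mathit{vars}(v) = s.\mathit{vars}(v)$, which says that $C(s)$ and $s$ agree outside $S$. Then instantiate the independence hypothesis $F \perp S$ with $s_1 := C(s)$ and $s_2 := s$. Its premise is exactly the agreement just obtained, so we get $F(C(s)) \leftrightarrow F(s)$. Taking the reverse direction with the assumed $F(s)$ gives $F(C(s))$.

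This step is essentially Theorem~\ref{thm:frame} specialised to the single post-state $C(s)$. The assigns hypothesis is precisely the statement $C(s) \in \mathit{havoc}(s,S)$, so one could alternatively finish by invoking Theorem~\ref{thm:frame} with $\mathcal{P} := F$ and $s' := C(s)$. In Lean I would prefer this route, since it reuses \texttt{phase3\_havoc\_sound}, falling back to the direct two-line argument if the unification is awkward.

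There is no genuine mathematical obstacle. The only delicate point is orientation: the independence hypothesis must be applied with $C(s)$ in the first slot and $s$ in the second, or the agreement premise must be symmetrised. Equality of $\mathit{vars}$ values is symmetric, so either way works, but in the Lean script this is where an explicit \texttt{Eq.symm} or \texttt{.mpr} versus \texttt{.mp} choice is needed. I would also point out after the proof that the argument uses only the final state $C(s)$, never the intermediate writes. This is why the observational semantics of Definition~\ref{def:observational} suffices, and why, by Proposition~\ref{prop:op-implies-obs}, the rule holds a fortiori under the operational semantics.
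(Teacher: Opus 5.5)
Your proof is correct and is essentially the paper's proof of \texttt{hoare\_frame\_rule}: $Q(C(s))$ comes from the Hoare hypothesis, and $F(C(s))$ comes from feeding the assigns hypothesis at $s$ into the independence hypothesis. The differences are cosmetic. The paper instantiates with $s_1 := s$, $s_2 := C(s)$, so it uses \texttt{.symm} followed by \texttt{.mp} where you use \texttt{.mpr}, and its Lean statement phrases independence through an explicit frame-variable list plus a disjointness-from-$S$ hypothesis, which it composes with the assigns hypothesis in the same one-line proof term.
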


Theorem~\ref{thm:hoare_frame} establishes that the snapshot-and-assert mechanism is a direct BMC mechanisation of the classical Hoare frame rule side condition $\mathit{FV}(F) \cap \mathit{Mod}(C) = \emptyset$~\cite{reynolds2002separation, o2019separation}, not an ad-hoc optimisation. The proof discharges $Q$ from the Hoare hypothesis, then reconstructs $F$ at the post-state: the assigns hypothesis shows the non-assigned variables are unchanged, and the frame property transfers across that equality.

The machine-checked proofs are in the supplementary material; \texttt{hoare\_frame\_rule} is shown in Figure~\ref{fig:lean_hoare}; it is the most instructive of the six.

\begin{figure}[t]
\begin{lstlisting}[basicstyle=\ttfamily\scriptsize]
theorem hoare_frame_rule
  (P Q F : State -> Prop) (C : State -> State)
  (S vars_frame : List String) :
  hoare P C Q ->                            -- {P} C {Q}
  observational_assigns C S ->              -- C respects assigns(S)
  is_frame F vars_frame ->                  -- F observes vars_frame
  (forall v in vars_frame, v not_in S) ->   -- frame disjoint from S
  hoare (fun s => P s /\ F s) C (fun s => Q s /\ F s) := by
  intro h_hoare h_assigns h_frame h_disj s (h_P, h_F)
  exact (h_hoare s h_P,
         (h_frame s (C s) (fun v hv =>
            (h_assigns s v (h_disj v hv)).symm)).mp h_F)
\end{lstlisting}
\caption{The machine-checked \texttt{hoare\_frame\_rule} (Lean~4, lightly
pretty-printed): the snapshot-and-assert mechanism is a direct realisation of
the Hoare frame rule side condition. $Q$ is discharged via \texttt{h\_hoare};
the frame $F$ is transferred across the post-state using \texttt{h\_assigns}
(non-assigned variables are unchanged).}
\label{fig:lean_hoare}
\end{figure}

\paragraph{Pointer disjunction extension.}
The implementation extends the scalar model to pointer-typed assigns targets via a disjunctive assertion. For \texttt{assigns(*ptr)}, the module generates for each non-assigned global,~$h$, the assertion $\mathtt{assert}(h = \mathit{snap}_h \;\vee\; \mathit{ptr} = \&h)$, motivated by the need to handle output-channel pointer patterns such as \texttt{assigns(*hdr\_len, *data\_len)} in the X.509 case study. \marienote{this is the first mention of the X.509 case study and it comes without any context, wither remove it from here or give it more description (I prefer removing it at this point)}

\begin{proposition}[Pointer Disjunction Soundness]
\label{prop:ptr}
Let \textup{ptr} be a pointer parameter whose value is not modified during the function body, and let $h$ be a global with $\&h \neq \mathit{ptr}$ at function entry. If the assertion $h = \mathit{snap}_h \vee \mathit{ptr} = \&h$ holds at function exit, then $h = \mathit{snap}_h$, i.e.\ $h$ satisfies the observational assigns condition.
\end{proposition}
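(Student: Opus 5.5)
The plan is a disjunctive-syllogism argument. The only real work is justifying that the pointer value, and hence the comparison $\mathit{ptr} = \&h$, means the same thing at exit as at entry. I would model this on the abstract \texttt{Address} type mentioned in Remark~\ref{rem:lean-scope}. Addresses of globals are a fixed map $\mathit{addr} : \mathsf{String} \to \mathit{Address}$ that does not depend on the state, since a global's address is determined at link time and is not part of the mutable store. The pointer parameter is a state-dependent value $\mathit{ptr}(s) : \mathit{Address}$. The snapshot is $\mathit{snap}_h = s_0.\mathit{vars}(h)$, taken at entry.

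The steps, in order, are as follows.

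\textbf{Step 1.} Formalise the hypothesis ``\textup{ptr} is not modified during the body'' as $\mathit{ptr}(s_f) = \mathit{ptr}(s_0)$ for the exit state $s_f$. This equality is what will let the entry-time hypothesis be applied at exit.

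\textbf{Step 2.} Rewrite the entry hypothesis $\&h \neq \mathit{ptr}$ as $\mathit{addr}(h) \neq \mathit{ptr}(s_0)$. By Step~1 this gives $\mathit{addr}(h) \neq \mathit{ptr}(s_f)$. Because $\mathit{addr}$ is state-independent, $\&h$ denotes the same address at entry and at exit.

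\textbf{Step 3.} Case-split on the exit assertion $s_f.\mathit{vars}(h) = \mathit{snap}_h \vee \mathit{ptr}(s_f) = \mathit{addr}(h)$.
\begin{itemize}
  \item In the left disjunct we obtain $s_f.\mathit{vars}(h) = s_0.\mathit{vars}(h)$ directly.
  \item The right disjunct contradicts Step~2, so it is closed by \emph{ex falso}.
\end{itemize}
The conclusion $s_f.\mathit{vars}(h) = s_0.\mathit{vars}(h)$ is exactly the per-location instance of Definition~\ref{def:observational} for $h \notin S$. In Lean this would be a few lines: \texttt{rcases} on the disjunction, then \texttt{exact} in the first branch and \texttt{absurd} in the second.

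The main obstacle is modelling rather than proof. The argument is only meaningful if the stability of $\&h$ and of $\mathit{ptr}$ across the body is built into the model correctly, namely as a state-independent address map and a frame hypothesis on $\mathit{ptr}$. If $\mathit{ptr}$ were allowed to be reassigned, the disjunct $\mathit{ptr} = \&h$ at exit could hold spuriously and the argument would fail. That failure is precisely why the hypothesis is stated.

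I would first try to state the unmodified-pointer assumption as an explicit equality premise, rather than deriving it from an assigns clause on the parameter. This keeps the proposition self-contained. It also leaves the C-level aliasing concerns (casts and pointer arithmetic yielding $\&h$) outside the model, consistent with Remark~\ref{rem:lean-scope}.
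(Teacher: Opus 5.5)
Your proof is correct and follows the paper's own argument: the unmodified-pointer hypothesis carries $\mathit{ptr} \neq \&h$ from entry to exit, which falsifies the right disjunct and leaves $h = \mathit{snap}_h$. You also state explicitly that $\&h$ does not depend on the state, a step the paper's proof uses without saying so.
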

\begin{proof}
Since \textup{ptr} is not modified, $\mathit{ptr}_{\mathit{exit}} = \mathit{ptr}_{\mathit{entry}}$. By assumption $\mathit{ptr}_{\mathit{entry}} \neq \&h$, so $\mathit{ptr}_{\mathit{exit}} \neq \&h$. The right disjunct is false; thus the left disjunct $h = \mathit{snap}_h$ must hold.
\end{proof}

The restriction to unmodified pointer values covers the common case of pointer parameters used as output channels. If the pointer itself is reassigned within the body, the disjunction is conservative (may generate false alarms) but remains sound: passing the assertion still implies that the observational condition holds for the at-entry binding of \textup{ptr}. Proposition~\ref{prop:ptr} and the full frame-compliance result are machine-checked alongside the five scalar theorems.

\subsection{When Operational Semantics Is Required}
\label{sec:scope}

The observational encoding is sound for \emph{sequential, single-threaded C programs} with no signal handlers accessing shared globals. Three program classes require the stronger operational semantics:

\begin{enumerate}
  \item \textbf{Concurrent programs.} Another thread may read shared memory between the write $h := 99$ and the restore $h := 0$; the intermediate state is observable.
  \item \textbf{Signal and interrupt handlers.} An interrupt handler may observe shared globals at any point during function execution; the same argument applies.
  \item \textbf{Cache-partitioning and WCET analysis.} These analyses bound physical write operations, not just the final state.
\end{enumerate}

These constraints define the boundary of sequential BMC itself. Any analysis
that requires observing intermediate writes falls outside the sequential,
single-threaded execution model that BMC targets, so these are limitations of
the setting rather than of our approach.

Two further boundaries are specific to our \emph{implementation}. First, the observational frame check tracks non-pointer objects of static lifetime, globals and function-local statics alike, arrays element by element up to 256 elements and as one equality beyond that; writes through pointer-typed globals and to the heap are outside its snapshot set, so their frame compliance is not enforced (\S\ref{sec:threats}). Second, a function taking an unbounded-size pointer input needs an \texttt{is\_fresh} extent precondition, which \toolname{} does not currently discharge on the Replace side: the check reduces there to a validity predicate alone, which is unconditionally false when the caller's argument has automatic storage and which constrains neither the stated extent nor separation between parameters (esbmc/esbmc\#6542, an implementation defect rather than a semantic tension). Our case studies avoid the first boundary by construction and the second by stating pointer preconditions as \texttt{p != NULL}, which on the current implementation concedes nothing that \texttt{is\_fresh} would deliver at a Replace site (\S\ref{sec:threats}); lifting both is future work. \marienote{I suggest deleting the previous sentence, this is one of those forward references that is not particularly helpful at this point}

\section{Implementation in \toolname{}}
\label{sec:methodology}

\toolname{} realises the contract framework of Section~\ref{sec:semantics} as a
single \texttt{frame\_enforcer} module that drives both verification phases
through shared snapshot/assert infrastructure (Figure~\ref{fig:frame_enforcer}).
The module exposes four operations: \texttt{materialize\_snapshots()} emits
\textsc{decl}\,+\,\textsc{assign} IR recording the entry value of each tracked
variable; \texttt{enforce\_frame\_rule()} emits, for every non-assigned
variable, an \textsc{assert} against its snapshot in Enforce mode or an
\textsc{assume} in Replace mode; \texttt{classify\_assigns\_targets()} separates
direct from pointer-typed targets so that aliasing disjunctions can be generated;
and \texttt{collect\_global\_variables()} enumerates the global scalars from the
symbol table, excluding pointer-typed and compiler-internal symbols. Because the
same snapshot machinery underlies both phases, the assert/assume duality is the
only difference between enforcing a contract and abstracting a call site.
\emph{Scope of the frame check.} This snapshot mechanism covers non-pointer
objects of static lifetime, globals and function-local statics alike, with
arrays held element by element up to 256 elements and as one equality beyond;
pointer-typed globals and heap cells are not snapshotted, so a write to such a
location outside the assigns clause is not caught by Enforce (the pointer disjunction of Proposition~\ref{prop:ptr}
handles \texttt{assigns(*ptr)} \emph{targets}, not the exclusion of non-assigned
pointer globals from the tracked set). DFCC's per-write monitoring, by contrast,
covers these locations; we treat this as a soundness-scope limitation
(\S\ref{sec:threats}).

\begin{figure}[t]
\centering
\begin{tikzpicture}[
  font=\scriptsize,
  box/.style={draw,rounded corners=1.5pt,inner sep=3pt,line width=.4pt,align=center},
  ctx/.style={box,fill=black!5,text width=31mm},
  ar/.style={-{Stealth[length=1.5mm]},line width=.45pt},
  elbl/.style={font=\scriptsize\scshape,inner sep=1.2pt,fill=white},
]
\node[box,fill=blue!8,text width=52mm,align=left] (mod) {%
  \textbf{\texttt{frame\_enforcer}}\\[2pt]
  \textbf{1} collect the tracked globals\\
  \textbf{2} classify the \texttt{assigns} targets\\
  \textbf{3} snapshot entry values\\
  \textbf{4} emit the frame obligation};
\coordinate (s) at ($(mod.south)+(0,-4mm)$);
\draw[ar] (mod.south) -- (s);
\node[ctx,anchor=north east] (ce) at ($(s)+(-3mm,-5.5mm)$) {Contract Enforce};
\node[ctx,anchor=north west] (ca) at ($(s)+(3mm,-5.5mm)$) {Call-site Abstraction};
\draw[ar] (s) -| (ce.north);
\draw[ar] (s) -| (ca.north);
\node[elbl,anchor=east] at ($(ce.north)+(-0.8mm,2.8mm)$) {Assert};
\node[elbl,anchor=west] at ($(ca.north)+(0.8mm,2.8mm)$) {Havoc\,+\,Assume};
\end{tikzpicture}
\Description{The frame_enforcer module: its four operations, and the two
contexts they are instantiated in.}
\caption{The \texttt{frame\_enforcer} module. Steps~1--3 are shared by both
contexts, so the \textsc{assert}/\textsc{assume} choice in step~4 is the
only difference between checking a contract and abstracting a call site.
Step~1 tracks non-pointer objects of static lifetime.}
\label{fig:frame_enforcer}
\end{figure}
\lc{Label the four module operations on the figure; caption should name inputs, outputs, workflow.} \weiqinote{Addressed: the figure is redrawn (in TikZ, so it stays editable). The four operations are now labelled and numbered in execution order, and the caption states what the module consumes and emits. We label them \emph{semantically} (``collect the tracked globals'', ``snapshot entry values'') rather than by their C\texttt{++} entry-point names, which are implementation detail a reader does not need to follow the architecture; the names themselves remain in the \S\ref{sec:methodology} prose. Redrawing also lets the figure carry what the previous raster could not: that steps~1--3 are shared, so the \textsc{assert}/\textsc{assume} choice in step~4 is the only difference between the two contexts.}

\subsection{Enforce Mode: Snapshot-and-Assert}

Given \texttt{--enforce-contract~$f$} (assigns compliance is checked by
default), \toolname{} generates the wrapper of Figure~\ref{fig:enforce}: it
snapshots the non-assigned globals, assumes the precondition, calls the (renamed)
original body, and asserts both assigns compliance and the postcondition. The frame assertions
(lines~8--9) fire only at function exit, never mid-execution, which realises
observational semantics by construction and yields $(G{-}M)+E$ VCCs. For a
pointer target \texttt{assigns(*ptr)}, the module emits a \emph{disjunctive}
guard for each non-assigned global~$h$,
$\mathtt{assert}(h = \mathit{snap}_h \vee \mathit{ptr} = \&h)$, which is sound:
if \texttt{ptr} does not alias~$h$, then $h$ is unchanged (Proposition~\ref{prop:ptr}).

\begin{figure}[t]
\begin{lstlisting}[language=C]
void compute_wrapper(int *arr, int n,
                     int lo, int hi) {
  /* 1. Snapshot every NON-assigned global (g_processed,
        the assigns target, is exempt -- it may change) */
  int snap_other = g_other;

  /* 2. Assume precondition */
  __ESBMC_assume(
      __ESBMC_is_fresh(arr, n*sizeof(int))
      && n > 0 && lo >= 0 && lo <= hi
      && hi <= 100);

  /* 3. Call original (renamed) body */
  int rv = __ESBMC_contracts_original_compute(
               arr, n, lo, hi);

  /* 4. Frame compliance: ASSERT at exit */
  assert(g_other == snap_other); /* per non-assigned global */
  /* For pointer target *ptr: disjunctive guard:
     assert(g == snap_g || ptr == &g)            */

  /* 5. Postcondition */
  assert(rv >= n * lo && rv <= n * hi);
}
\end{lstlisting}
\caption{Enforce wrapper generated for \texttt{compute()} (pseudo-C rendering of GOTO IR).}
\label{fig:enforce}
\end{figure}

\subsection{Replace Mode: Selective Havoc at Call Sites}

At each call site the body is never examined: \toolname{} replaces the entire
call (Figure~\ref{fig:replace}, lines~1--4) with an assertion of the
precondition, a havoc of the declared assigns targets, and an assumption of the
postcondition, so no loop unrolling occurs, which is why Replace eliminates the
$O(N \times K)$ scaling. Havocking follows a three-tier strategy: \emph{precise}
(explicit assigns list, only listed targets havoced); \emph{empty}
(\texttt{\_\_ESBMC\_assigns(0)}, a pure function, no havoc); and
\emph{conservative} (no assigns clause, all global scalars havoced).
Field-sensitive targets such as \texttt{assigns(*p)} or \texttt{assigns(arr[i])}
are handled in both modes by AST-level substitution of formal parameters with
actual arguments, requiring no alias analysis.

\begin{figure}[t]
\begin{lstlisting}[language=C]
/* 1. Caller must satisfy precondition */
assert(__ESBMC_is_fresh(buf, n*sizeof(int))
       && n > 0 && ...);

/* 2. Havoc declared assigns targets only */
g_processed = nondet_int();

/* 3. Nondeterministic return value */
int result = nondet_int();

/* 4. Assume postcondition */
__ESBMC_assume(result >= n * lo
               && result <= n * hi);
/* original call removed */
\end{lstlisting}
\caption{Call-site transformation for Replace mode.}
\Description{Pseudocode showing call-site transformation for Replace mode.}
\label{fig:replace}
\end{figure}

\marienote{to improve the flow try to end each section with a summary sentence or two: this is what we presented and this is what we will present next ...}
\section{Analytical Model}
\label{sec:model}

\subsection{VCC Complexity}

Table~\ref{tab:vcc-complexity} shows the VCC count formulas. The observational formula $(G-M)+E$ follows directly from the snapshot-and-assert construction: one assertion per non-assigned global plus one per postcondition. The operational formula $20+W+E$ arises from DFCC's per-write membership tests (one per write instruction) plus one per postcondition, plus a fixed infrastructure offset of~20 VCCs measured empirically on \baseline{}~6.8.0.


\begin{table}[t]
\small
\caption{VCC complexity. $G$: globals, $M$: assigns targets, $W$: write instructions, $E$: ensures clauses.}
\label{tab:vcc-complexity}
\begin{tabular}{llll}
\toprule
\textbf{Approach} & \textbf{VCCs} & \textbf{Scales with} & \textbf{Indep.\ of} \\
\midrule
\toolname{} (Observational) & $(G-M)+E$ & $G$ (globals) & $W$ (writes) \\
\baseline{} (DFCC)          & $20+W+E$  & $W$ (writes)  & $G$ (globals) \\
\bottomrule
\end{tabular}
\end{table}

\paragraph{On the constant 20.}
The baseline offset of~20 VCCs for \baseline{} (DFCC) is the intercept of its linear VCC growth, measured empirically on \baseline{}~6.8.0: a single-function harness with $W$ write instructions and $E$ postconditions produces $20+W+E$ VCCs for $W \ge 1$ (the offset arises from internal infrastructure checks inserted by \texttt{goto-instrument}: null-pointer guards, frame overhead, etc.; the empty-body $W{=}0$ case sits off this line, as the write-set machinery is partly elided). \marienote{the text inside the () might be better as a footnote} This constant is version-specific; all experiments use \baseline{}~6.8.0, for which $20+W+E$ matched every test harness (Table~\ref{tab:rq1}).

\paragraph{Crossover condition.}
\toolname{} generates fewer VCCs than \baseline{} when:
\[
  (G - M) + E < 20\mathbin{\text{+}}W\mathbin{\text{+}}E
  \;\;\Longleftrightarrow\;\;
  G < 20\mathbin{\text{+}}M\mathbin{\text{+}}W
\]
The observational approach pays a fixed cost in globals~$G$, and the
operational approach pays per write~$W$. The crossover is where the global
count equals the write count plus the DFCC infrastructure overhead. This
counts verification conditions, not solver work: a VCC is not a comparable unit
across \toolname{}'s SMT and \baseline{}'s SAT backends, so the criterion says
which encoding emits fewer obligations, not which tool finishes first. The empirical results in Exp\,1b (one of the synthetic experiments described in Section~\ref{sec:experiments}; ``Exp'' labels are used throughout to identify individual experiment configurations) confirm the crossover: at $G=22$, $M=1$, $W=1$, the VCC counts are equal (consistent with $G = 20\mathbin{\text{+}}M\mathbin{\text{+}}W$). Figure~\ref{fig:crossover} visualises the full $(G, W)$ parameter space analytically: the green region ($G < 20{+}M{+}W$) is where observational encoding generates fewer VCCs; the red region is where DFCC wins; the diagonal dashed boundary is the exact crossover $G = 20{+}M{+}W$, and the empirical point from Exp\,1b is marked.
\lc{List the 21 checks explicitly; ``etc.'' blocks reproducibility of a load bearing constant.} \weiqinote{Partially addressed, and the target moved: the constant is 20, not 21 (see the operational-encoding paragraph in \S\ref{sec:background}). The paragraph above now reports the \emph{measurement} that produces it, a sweep over $W$ and $E$ on \baseline{}~6.8.0, raw table in the artifact, plus the off-line $W{=}0$ special case, rather than a prose list ending in ``etc.''. I did not enumerate the individual checks by name; say the word if you still want that.} \marienote{I suggest adding the list of 21 checks to an appendix or a file in the repo for this paper.}


\begin{figure}[t]
  \centering
  \includegraphics[width=1\columnwidth]{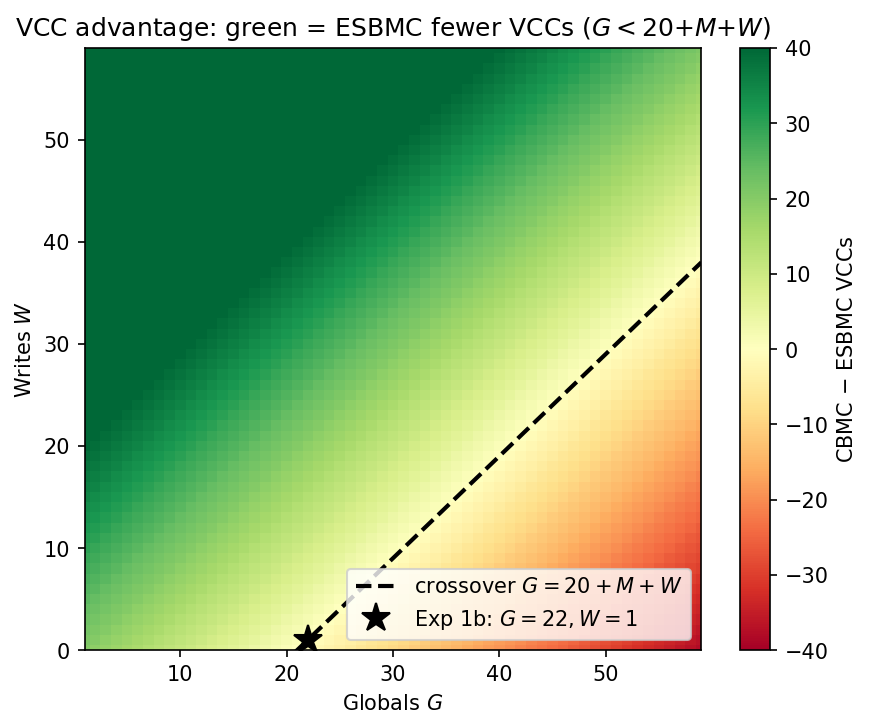}
  \caption{VCC advantage heatmap over $(G, W)$ space ($M=1$, $E=1$). Green = \toolname{} wins ($\ensuremath{G < 20\text{+}M\text{+}W}$); red = \baseline{} wins; diagonal dashed line = crossover $\ensuremath{G = 20\text{+}M\text{+}W}$. The boundary is derived analytically from the VCC formulas; the empirical crossover at $\ensuremath{G{=}22}$ (Exp\,1b) is marked.}
  \label{fig:crossover}
\end{figure}
\lc{Crossover is in $G$, not $N$; fix caption/axis notation to match the text.} \weiqinote{Addressed: the figure is regenerated with axes in $G$ and $W$, the dashed boundary labelled $G = 20{+}M{+}W$, and the Exp\,1b empirical point starred at $G{=}22$. Caption and axis notation now match the text.}

\subsection{Breakeven Call Count \texorpdfstring{$K^*$}{K*}}

Modular verification incurs a one-time Enforce cost, $T_E$, and per-call Replace cost, $T_R$; direct verification costs $T_D$ per call. Modular pays off when $K > K^*$, where:
\[ K^* = \frac{T_E}{T_D - T_R} \]

Table~\ref{tab:kstar} shows the measured values for \texttt{compute()} at varying loop bound~$N$. Nothing in the definition confines $K^*$ to values above one. When $T_E \ll T_D$, that is, when verifying a function against its precondition alone is much cheaper than verifying the same body inside its caller, $K^*$ falls below~1 and a single call site already pays. Section~\ref{sec:svcomp} measures four such cases, with $K^*$ between $0.05$ and $0.27$. The ratio is only as meaningful as the property $T_D$ is measured against: an under-specified harness deflates $T_D$ and inflates $K^*$, which is how those four were first misread as sitting below crossover.


\begin{table}[t]
\small
\caption{Breakeven call count $K^*$ ($T_D{=}$Direct at $K{=}1$, $T_E{=}$Enforce, $T_R{=}$Replace). Modular pays off once $K > K^*$.}
\label{tab:kstar}
\begin{tabular}{rrrrrc}
\toprule
$N$ & $T_D$ (ms) & $T_E$ (ms) & $T_R$ (ms) & $K^*$ & $K{=}1$ speedup \\
\midrule
 5 & 200  & 367   & 6 & 1.89 & 0.5$\times$ \\
10 & 597  & 1787  & 5 & 3.02 & 0.3$\times$ \\
20 & 2771 & 12503 & 5 & 4.52 & 0.2$\times$ \\
\bottomrule
\end{tabular}
\end{table}

For \texttt{compute()} the one-time Enforce cost $T_E$ must be amortised over the call sites: modular verification is slower than direct at $K{=}1$ and pays off only once $K > K^*$. Here $K^* \approx 2$--$5$, increasing with the loop bound~$N$, because for this tight numeric postcondition the isolated Enforce proof itself grows with~$N$ (from 367\,ms at $N{=}5$ to 12.5\,s at $N{=}20$), faster than a single direct call $T_D$. The Replace cost is negligible ($T_R < 20$\,ms, a pure havoc). For loop-free functions $T_D \approx T_R$ so $K^* \to \infty$ and direct inlining is always competitive, as confirmed by the X.509 case study (Section~\ref{sec:casestudy}). The decisive regime is large $N{\times}K$, where direct verification times out entirely while modular completes (Table~\ref{tab:rq2}).

\section{Experimental Evaluation}
\label{sec:experiments}

\paragraph{Setup.}
All experiments run on x86\_64 Linux (Ubuntu~24.04) on an idle 16-vCPU Intel Xeon 6982P-C with 61\,GiB RAM, governor pinned to \texttt{performance}, clock steady between 3.5 and 3.9\,GHz. \toolname{}: \textbf{one build for every measurement}, \texttt{master} commit \texttt{2f719f59e0}, which reports version 8.5.0 and lies 61 commits after the 8.5 release tag. The build was configured without the ibex interval library. The 8.4 and 8.5 Linux releases include it, and it leaves the FPU rounding mode upward at startup (\toolname{} issue \#8043), after which Bitwuzla exhausts memory on Direct baselines that take seconds here. The 8.5 release also lacks five \texttt{is\_fresh} fixes from that span and fails X.509 scenario~B, and the later \texttt{v8.6} line changes the Enforce harness again (\#8042) and fails the ML-KEM contracts. The artifact therefore builds the measured commit without ibex, which reproduces every verdict here and the timings within a few percent on the same CPU model. Default solver Bitwuzla~\cite{niemetz2023bitwuzla}; the artifact records that commit, the binary's SHA-256 and the clock per measurement. Runs are serial, so concurrent processes never contend for cores. \baseline{}: version 6.8.0, DFCC via \texttt{goto-cc} $\to$ \texttt{goto-instrument} $\to$ \texttt{cbmc}. Timeout 120\,s (600\,s for the $N{=}256$ scaling study, Table~\ref{tab:mlkem-scaling}). \textbf{All timing tables report medians of five runs}, with two stated exceptions: the FreeRTOS \texttt{heap\_1} checks (\S\ref{sec:freertos}) are single-run, and the real-size scaling table (Table~\ref{tab:mlkem-scaling}, $N \in \{64,128,256\}$) uses three repetitions for Enforce and Replace and a single run for Direct, because one Direct run alone costs $182$\,s at $N{=}64$, $495$\,s at $N{=}128$, and exhausts the $600$\,s budget at $N{=}256$. The Direct entries there are consequently single measurements; the two that completed agree with the monotone trend, and the $N{=}256$ conclusion is a timeout in both tools rather than a time comparison, so it does not rest on run-to-run precision. Run-to-run variation is below~5\,ms for sub-second cases and below~2\% for multi-second cases (min--max spread is in the artifact). \textbf{Units.} Cross-tool comparisons (ML-KEM Tables~\ref{tab:mlkem-esbmc}--\ref{tab:mlkem-cbmc}, scaling Table~\ref{tab:mlkem-scaling}) use \emph{end-to-end wall-clock for both tools}: including \baseline{}'s full \texttt{goto-cc}$+$\texttt{goto-instrument}$+$\texttt{cbmc} pipeline, so the two tools are measured in the same unit. Within-\toolname{} synthetic experiments (RQ1--RQ2, Tables~\ref{tab:rq1}--\ref{tab:rq2}; SV-COMP, Table~\ref{tab:svcomp}) report \toolname{}'s \emph{solver} (BMC) time, isolating verification work from per-invocation process startup; for \baseline{} in RQ1 we likewise report solver time (the \texttt{goto-cc}$+$\texttt{goto-instrument} steps add $80$--$135$\,ms, rising with~$W$ as the DFCC pass instruments each write). The X.509 case study reports full-pipeline wall-clock for both tools.
\lc{Add a benchmark overview table; give suite, count, origin, selection criteria, availability, before RQ1.} \weiqinote{Addressed: Table~\ref{tab:suites} is a new benchmark overview, suite, subject count, provenance, selection criterion, and what each exercises, placed before RQ1.}
\lc{Headline RQ2/ML-KEM results are single run; report medians plus variance for the multi second cases.} \weiqinote{Addressed, and this triggered a full re-measurement. All \toolname{} timings were re-collected on one build, run serially on an idle machine, as medians of five runs with min--max spread in the artifact. That build has since been superseded: it was described as a frozen 8.4.0 build but its commit was never recorded, so the numbers could not be checked against the tool that produced them. Every measured table is now re-collected on \toolname{} at \texttt{master} commit \texttt{2f719f59e0}, on a dedicated idle server, and \texttt{check/verify\_paper.py} in the artifact re-derives each figure from the logs. Two measurements are not five-run and are now flagged as such in the setup paragraph: the FreeRTOS check (single-run) and the real-size scaling table (three repetitions, single Direct run per~$N$, since one Direct run costs 182\,s, 495\,s, or times out). It mattered: the old ``$4\times$ cheaper than DFCC'' was a single-run outlier. A later full re-baseline on one pinned commit puts the honest value at $2.3$--$2.8\times$ and shows it does not grow monotonically with~$N$; both are now reported as measured.}



We address three research questions (RQ1--RQ3) testing the analytical model, followed by four case studies. Table~\ref{tab:suites} summarises the benchmark suites, their provenance, and why each is included; author-written subjects are used only for the synthetic model-testing experiments, while every case study uses third-party code with, where available, third-party contracts.

\begin{table*}[t]
\footnotesize
\setlength{\tabcolsep}{4pt}
\caption{Benchmark suites: provenance and selection. ``Author'' = written by us to vary one model parameter in isolation; ``third-party'' = external code/specs we did not author.}
\label{tab:suites}
\begin{tabular}{lllp{4.6cm}p{4.4cm}}
\toprule
Suite & \#\,Subjects & Provenance & Selection criterion & What it exercises \\
\midrule
Synthetic RQ1 (VCC/$W$) & 2 sweeps ($W,G$) & Author & vary write count $W$ (Exp\,1a) and global count $G$ (Exp\,1b) in isolation & assigns-cost decoupling from body complexity; the VCC crossover $G{=}20{+}M{+}W$ \\
Synthetic RQ2 (\texttt{compute}) & $3{\times}4$ grid & Author & vary loop bound $N$ and call count $K$ & call-count amortisation and the breakeven $K^*$ \\
RQ3 boundary patterns & 6 patterns & Author & one idiom per field-/index-sensitive assigns pattern & semantic-boundary detection (not a coverage claim) \\
Seeded faults (\S\ref{sec:ce}) & 6 mutants & Author & one mutant per fault class on the running example & relative soundness: no Direct-fail/Modular-pass \\
SV-COMP VerifyThis & 6 tasks & Third-party & VerifyThis tasks that compile under bounded arrays and admit a sound bounded contract & single-call-site ($K{=}1$) below-crossover regime \\
ANSSI X.509 & 1 parser, 7 scenarios & Third-party & the ANSSI reference X.509 parser & loop-free, pointer-heavy, multi-level call chains ($K^*{\to}\infty$) \\
FreeRTOS \texttt{heap\_1} & 1 function, 9 checks & Third-party & the \texttt{pvPortMalloc} allocator & pointer/heap allocator logic in sequential isolation \\
ML-KEM (\texttt{mlkem-native}) & 3 kernels & Third-party code + contracts & kernels spanning a range of per-coefficient body cost & scaling on FIPS-203 code to the real size $N{=}256$ \\
\bottomrule
\end{tabular}
\end{table*}


\subsection{RQ1: Does VCC Count Decouple from Body Complexity?}
\label{sec:rq1}

\paragraph{Hypothesis.}
\toolname{}'s VCC count is constant as $W$ increases; \baseline{}'s grows linearly.

\paragraph{Setup.}

Exp\,1a: $G=5$ globals, $M=1$ assigns target, $E=1$ ensures, $W \in \{1, 10, 50, 100, 200, 500\}$ sequential writes.

Exp\,1b (complementary): fixed $W=1$. We vary 
\\$G \in \{1, 5, 10, 20,
50, 100, 200, 500\}$ to test the $G$-scaling direction.

\paragraph{Results.}
Table~\ref{tab:rq1} confirms both formulas exactly, and the mechanism is worth
stating because it is what the rest of the paper rests on. The observational
check asks one question per global that the clause did not name: was this
location the same at exit as at entry? The body's writes do not enter that
question. Whether \texttt{compute} assigns \texttt{g0} once or five hundred
times, the wrapper snapshots the same four unnamed globals and asserts the same
four equalities, so \toolname{} emits $(G{-}M){+}E = 5$ verification conditions
at every~$W$. DFCC asks the dual question, one per write: is \emph{this} write
inside the write set? That count is a property of the body, so it is
$20{+}W{+}E$, here $21{+}W$.

The consequence is a ratio that grows without bound in~$W$: $4.4\times$ at
$W{=}1$ and $104.2\times$ at $W{=}500$. This is the result that survives every
change of setting, because it is a \emph{count}, not a time. It is fixed by the encoding
before any solver runs, so it does not move with the machine, the SMT backend,
or the solver's version, and re-measuring the whole study on different hardware
and a different \toolname{} build reproduced every cell of both columns
unchanged while every timing in the paper moved. Time follows the count only
loosely: at $W{=}500$, $104\times$ fewer verification conditions buy $6.2\times$
less wall-clock, because the conditions are not equally hard and because
per-invocation start-up, $106$--$133$\,ms here, dominates a small body.

The decoupling is directional, not free. Exp\,1b holds $W{=}1$ and varies~$G$:
\baseline{} stays at 22\,VCCs while \toolname{} grows as $(G{-}M){+}E = G$,
crossing over at $G{=}22$, exactly where $G < 20{+}M{+}W$ stops holding. So the
observational encoding trades a dependence on the body's write count for a
dependence on the number of globals in scope, which is the right trade whenever
functions are larger than the state they may modify. That is the common case in
the code we study: \texttt{mlk\_poly\_reduce} writes 256 coefficients and touches
no global at all. Figure~\ref{fig:vcc_scaling} shows both directions.


\begin{table}[t]
\small
\caption{RQ1: VCC count and time vs.\ write count $W$ ($G=5$, $M=1$, $E=1$).}
\label{tab:rq1}
\begin{tabular}{rrrrrr}
\toprule
\multirow{2}{*}{$W$}
  & \multicolumn{2}{c}{\toolname{} (Obs.)}
  & \multicolumn{2}{c}{\baseline{} (DFCC)}
  & \multirow{2}{*}{VCC ratio} \\
\cmidrule(lr){2-3}\cmidrule(lr){4-5}
  & VCCs & Time & VCCs & Time & \\
\midrule
1   & 5 & 106\,ms & 22  & 41\,ms  & 4.4$\times$ \\
10  & 5 & 110\,ms & 31  & 55\,ms  & 6.2$\times$ \\
50  & 5 & 110\,ms & 71  & 116\,ms & 14.2$\times$ \\
100 & 5 & 112\,ms & 121 & 196\,ms & 24.2$\times$ \\
200 & 5 & 118\,ms & 221 & 353\,ms & 44.2$\times$ \\
500 & 5 & 133\,ms & 521 & 823\,ms & 104.2$\times$ \\
\bottomrule
\end{tabular}
\end{table}

\begin{figure*}[t]
  \centering
  \begin{minipage}{0.4\textwidth}
    \centering
    \includegraphics[width=\linewidth]{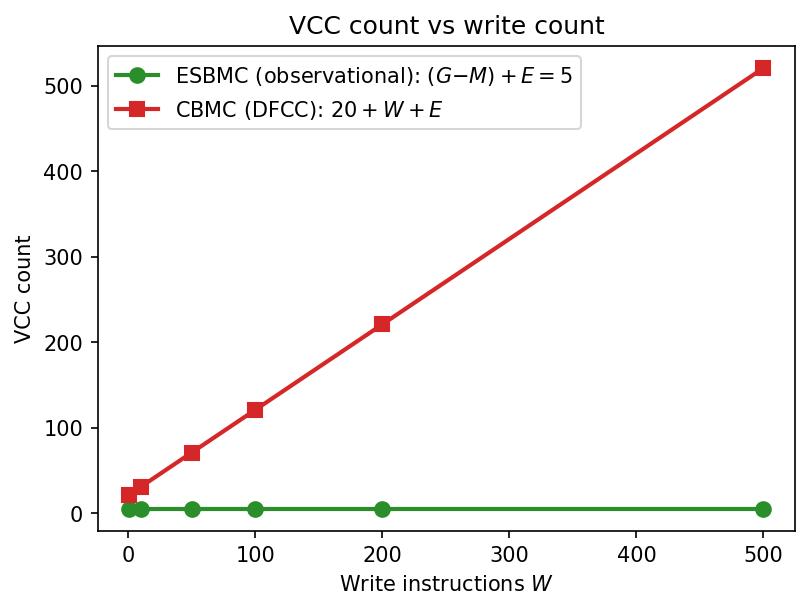}
  \end{minipage}\hfill
  \begin{minipage}{0.4\textwidth}
    \centering
    \includegraphics[width=\linewidth]{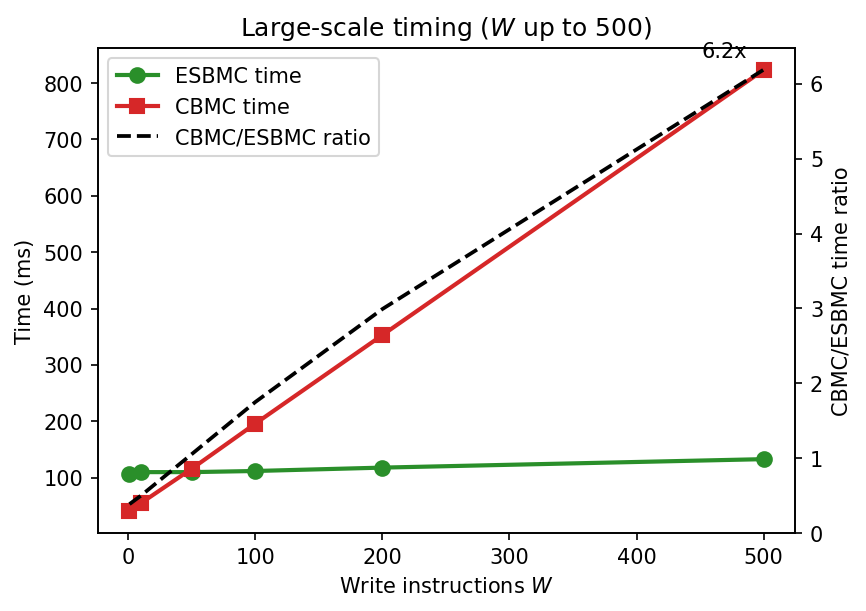}
  \end{minipage}
  \caption{Left: VCC counts vs.\ $W$ ($G=5$, $M=1$, $E=1$). \toolname{} is constant at $(G{-}M)+E=5$, while \baseline{} grows linearly as $20+W+E$. Right: large-scale timing ($W$ up to 500); the dashed line (right axis) gives the \baseline{}/\toolname{} speedup ratio, reaching $6.2{\times}$ at $W=500$.}
  \label{fig:vcc_scaling}
\end{figure*}

\begin{dialogbox}
\paragraph{Answer to RQ1.}
Yes. \toolname{}'s VCC count is fully decoupled from $W$. At $W=500$: $104{\times}$ fewer VCCs, but only $6.2{\times}$ faster, a gap that itself shows VCC \emph{count} is not a linear proxy for solver time (the two scale differently), so we treat the VCC formulas as a structural characterisation and rely on the direct timing experiments for performance claims. \baseline{} times here cover the \texttt{cbmc} step only (a lower bound); the full three-step pipeline adds $80$--$135$\,ms of overhead orthogonal to the VCC decoupling.
\end{dialogbox}
\lc{VCC predicts crossover sign, not magnitude; 52$\times$ VCCs give 5.5$\times$ time, so add the scatter backing $r$.} \weiqinote{Addressed by removal: the Pearson $r > 0.97$ claim is gone, it was overreach. The tension is now stated directly in both RQ1 and \S\ref{sec:threats}: at $W{=}500$ \baseline{} emits $104{\times}$ more VCCs yet runs only $6.2{\times}$ slower, so VCC count is a structural characterisation of encoding cost and explicitly not a linear proxy for solver time. The scatter plot is not drawn; the prose concession seemed to do more work than a figure would, but I will add it if you disagree.}

\subsection{RQ2: When Does Modular Verification Pay Off?}
\label{sec:rq2}


\paragraph{Hypothesis.}
Modular verification pays off once the call count exceeds the crossover~$K^*$, and the speedup grows with~$K$; it is not beneficial at $K{=}1$, where the one-time Enforce cost is unamortised.

\paragraph{Setup.}
Exp\,2: \texttt{compute()} called $K$ times, $N \in \{5, 10, 20\}$, $K \in \{1, 3, 5, 10\}$. We restrict $N \le 20$ because the isolated Enforce cost grows superlinearly in~$N$. The $T_E$ column of Table~\ref{tab:rq2} rises $34\times$ for a $4\times$ increase in~$N$, so beyond $N{=}20$ a single Enforce consumes the whole budget and the $K$ axis this experiment varies is no longer visible. The limit is cost, not expressiveness. The contract states the parameter's extent with \texttt{\_\_ESBMC\_is\_fresh(arr, N*sizeof(int))}, which \toolname{} requires wherever a contract dereferences a pointer parameter; it verifies at $N{=}26$, $40$ and $64$ at no extra cost. The fixed-size ML-KEM study (Section~\ref{sec:mlkem}) exercises larger bodies.

\paragraph{Results.}
Direct grows \emph{superlinearly} in both $N$ and $K$ (Table~\ref{tab:rq2}): at $N=20$, $K=10$ it exceeds the 120\,s timeout. The modular cost is dominated by the one-time Enforce ($T_E$), since Replace is a pure havoc costing under 20\,ms. At $K{=}1$ modular is slower than direct ($0.2$--$0.5{\times}$). It crosses over at $K \approx 2$--$5$ ($=K^*$, Table~\ref{tab:kstar}) and reaches $23{\times}$ at $K{=}10$, beyond which Direct times out.
\begin{table}[th]
\small
\caption{RQ2: Direct vs.\ modular \emph{solver} (BMC) time. ``>120\,s''~=~timeout. Modular~=~Enforce~$+$~Replace; speedup~=~Direct/Modular. Replace verifies the $K$-call harness with \texttt{compute} abstracted; its cost tracks the harness assertion (here trivially discharged by the assumed postcondition, hence $\approx 3$\,ms), not the callee body, so across the paper Replace ranges from a few ms (this harness) to seconds (a quantified postcondition, \S\ref{sec:mlkem}).}
\label{tab:rq2}
\begin{tabular}{rrrrrrr}
\toprule
$N$ & $K$ & Direct & Enforce & Replace & Modular & Speedup \\
\midrule
5  & 1  & 200\,ms & 367\,ms   & 6\,ms  & 373\,ms   & 0.5$\times$ \\
5  & 3  & 604\,ms & 367\,ms   & 13\,ms & 380\,ms   & 1.6$\times$ \\
5  & 5  & 1.2\,s  & 367\,ms   & 14\,ms & 381\,ms   & 3.1$\times$ \\
5  & 10 & 6.6\,s  & 367\,ms   & 18\,ms & 385\,ms   & 17.1$\times$ \\
10 & 1  & 597\,ms & 1787\,ms  & 5\,ms  & 1792\,ms  & 0.3$\times$ \\
10 & 3  & 2.7\,s  & 1787\,ms  & 13\,ms & 1800\,ms  & 1.5$\times$ \\
10 & 5  & 9.9\,s  & 1787\,ms  & 14\,ms & 1801\,ms  & 5.5$\times$ \\
10 & 10 & 42.0\,s & 1787\,ms  & 18\,ms & 1805\,ms  & 23.3$\times$ \\
20 & 1  & 2.8\,s  & 12503\,ms & 5\,ms  & 12508\,ms & 0.2$\times$ \\
20 & 3  & 19.9\,s & 12503\,ms & 13\,ms & 12516\,ms & 1.6$\times$ \\
20 & 5  & 71.3\,s & 12503\,ms & 15\,ms & 12518\,ms & 5.7$\times$ \\
20 & 10 & {>120\,s} & 12503\,ms & 17\,ms & 12520\,ms & {>9.6$\times$} \\
\bottomrule
\end{tabular}
\end{table}

\begin{figure*}[t]
  \centering
  \includegraphics[width=\textwidth]{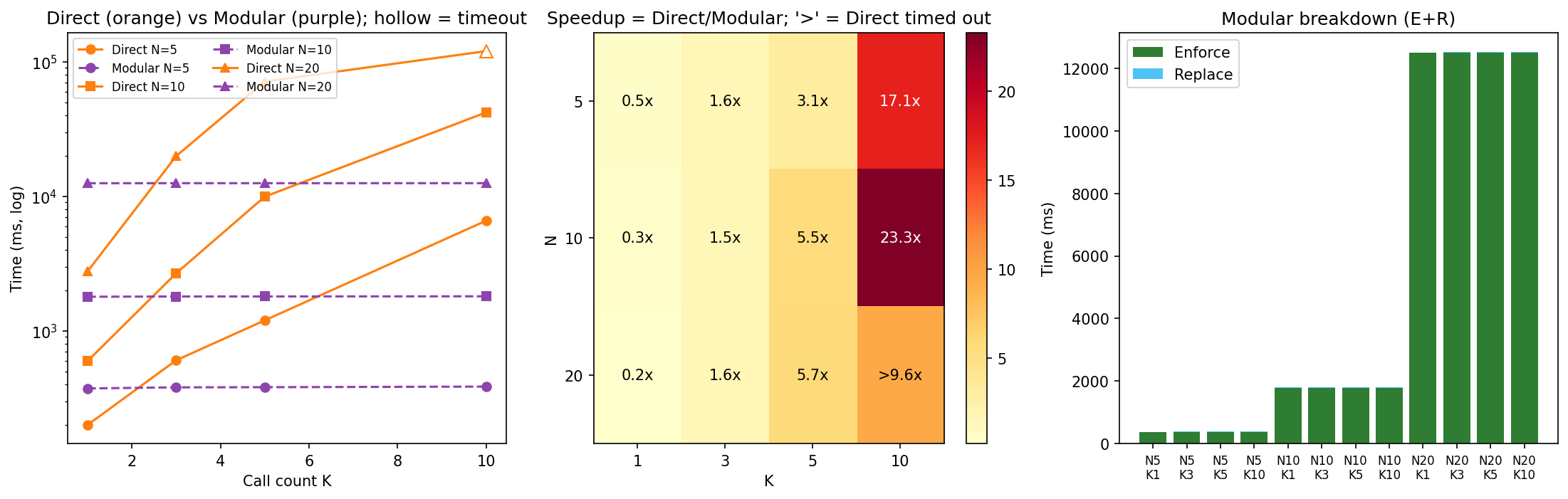}
  \caption{Direct vs.\ modular scaling (Exp\,2). Left: time (log scale): Direct grows as $O(N{\times}K)$ and times out; Modular cost is dominated by the one-time Enforce. Centre: speedup heatmap; cells marked ``\texttt{>}'' are lower bounds where Direct timed out ($>$120\,s). Right: Modular breakdown (Enforce${+}$Replace).}
  \Description{Three-panel figure comparing Direct and Modular verification across loop bound $N$ and call count $K$: log-scale times, a speedup heatmap with timeouts marked, and a cost breakdown showing Enforce plus near-constant Replace.}
  \label{fig:scalability}
\end{figure*}

\begin{dialogbox}
    
\paragraph{Answer to RQ2.}
The decisive advantage is categorical: at $N{=}20$, $K{=}10$ monolithic BMC
times out ($>$120\,s) while the modular pipeline completes in
$12.5$\,s (Figure~\ref{fig:scalability}); it verifies what direct BMC cannot.
Within the budget, the speed-up grows with the call count, reaching $23{\times}$
at $N{=}10$, $K{=}10$, once $K$ passes a crossover $K^* \approx 2$--$5$ that our
$K^*$ model predicts as a bound rather than an exact value
(Table~\ref{tab:kstar}): because direct verification is superlinear in~$N$, the
model pins down the crossover \emph{region} but not a single exact~$K^*$. Below the crossover the
one-time Enforce cost is unamortised ($0.2$--$0.5{\times}$ at $K{=}1$), so the
median over all tested points is $2.4{\times}$. For this benchmark the value of
modular verification lies at higher call counts and, above all, in completing
verification tasks that monolithic BMC cannot. It is not confined to them:
where the isolated Enforce is much cheaper than verifying the same body inside
its caller, $K^*$ drops below~1 and one call site suffices
(\S\ref{sec:svcomp}).
\lc{Model predicts the crossover region, not exactly; Direct is superlinear so $K^*$ is a lower bound.} \weiqinote{Addressed: ``predicts exactly'' is gone; $K^*$ is now stated as a lower bound because Direct is superlinear, and the range is given consistently as ${\approx}2$--$5$ in the conclusion, in RQ2, and in Table~\ref{tab:kstar} (it was 2--3 / 2--4 / 1.86--4.90 in three different places before).}

\end{dialogbox}

\subsection{RQ3: What Is the Scope of Observational Assigns Semantics?}
\label{sec:rq3}

\paragraph{Hypothesis.}
Observational and operational assigns semantics agree on all field-sensitive patterns that arise in practice; they diverge only on the temporary-write pattern (write then restore before exit), which the theoretical analysis in \S\ref{sec:scope} identifies as the boundary case.

\paragraph{Setup.}
Exp\,3: \texttt{foo()} declares \texttt{assigns(g)} but internally writes \texttt{h=99; h=0} (restores before exit), the boundary case. Exp\,3A: six field-sensitive patterns (struct fields, array elements, pointer dereferences, mixed targets).

\paragraph{Results.}
Table~\ref{tab:rq3_boundary} and Figure~\ref{fig:soundness_boundary} show the verdicts. On Exp\,3, observational semantics reports \textsc{pass} (the exit state equals the entry state for all $\ell \notin S$), whereas operational semantics reports \textsc{fail} (the intermediate write to $h \notin S$ is detected). Both encodings agree on all six Exp\,3A patterns.
\lc{Six patterns is a small sample; justify representativeness or expand to strengthen the claim.} \weiqinote{Not addressed. Acknowledged as a limitation rather than fixed: expanding the pattern set is genuinely new work, and I would like your read on whether it earns the space.} \marienote{can you justify only 6 patterns?}

\begin{table}[t]
\small
\caption{RQ3: Semantic boundary.}
\label{tab:rq3_boundary}
\begin{tabular}{llccp{1cm}}
\toprule
& Description & \toolname{} & \baseline{} & Observation \\
\midrule
\multicolumn{5}{l}{\emph{Temporary-write (Exp\,3)}} \\
Exp\,3 & \texttt{assigns(g)}: \texttt{h=99; h=0} &
  \textsc{pass} & \textsc{fail} & Semantics diverge \\
\midrule
\multicolumn{5}{l}{\emph{Field-sensitive (Exp\,3A)}} \\
3A-1 & Struct field \texttt{p->x} & \textsc{pass} & \textsc{pass} & Equivalent \\
3A-2 & Array element \texttt{arr[i]} & \textsc{pass} & \textsc{pass} & Equivalent \\
3A-3 & Multi-field \texttt{m->a, m->c} & \textsc{pass} & \textsc{pass} & Equivalent \\
3A-4 & Global + pointer mixed & \textsc{pass} & \textsc{pass} & Equivalent \\
3A-5 & \texttt{\_\_ESBMC\_old()} postcondition & \textsc{pass} & \textsc{pass} & Equivalent \\
3A-6 & Weak ensures (violation) & \textsc{fail} & \textsc{fail} & Both detect \\
\bottomrule
\end{tabular}
\end{table}

\begin{figure}[t]
  \centering
  \includegraphics[width=0.38\textwidth]{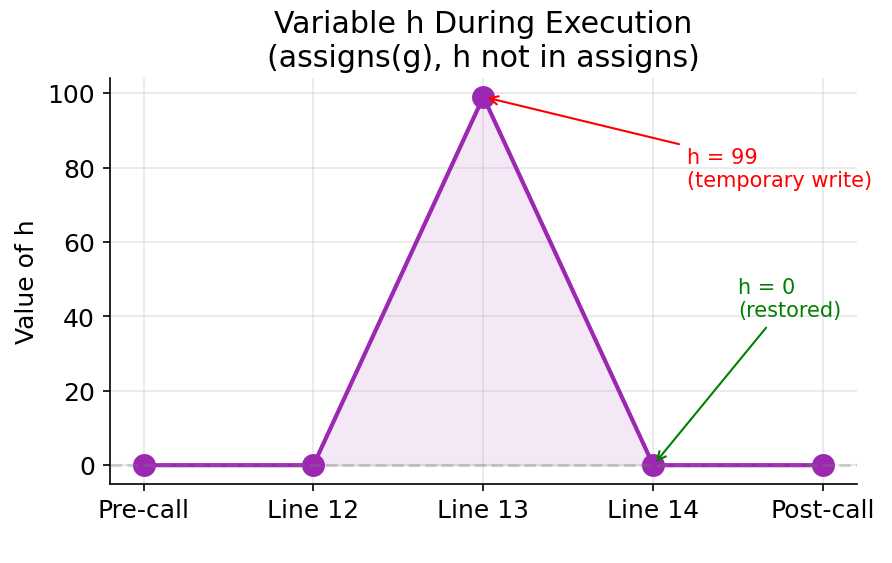}
  \caption{Semantic boundary (Exp\,3), the temporary-write pattern. The
  non-assigned global $h$ spikes to 99 mid-execution and is restored to 0 before
  exit. \toolname{}'s observational rule reports \textsc{pass}
  ($h_\mathit{post}=h_\mathit{pre}$); \baseline{}/DFCC reports \textsc{fail}, as
  the intermediate write to $h$ is intercepted.}
  \label{fig:soundness_boundary}
\end{figure}

\begin{dialogbox}
\paragraph{Answer to RQ3.}
In all tested cases, observational and operational semantics diverge precisely on the temporary-write pattern and agree on every field-sensitive pattern, consistent with the theoretical boundary of \S\ref{sec:scope}. We do not claim exhaustive coverage of all possible assigns patterns. The six Exp\,3A cases are representative of common idioms in embedded C. For sequential modular verification, which is the scope of both tools, the observational encoding is sound and sufficient; operational semantics is required only when intermediate writes must be visible to concurrent observers.
\end{dialogbox}

\subsection{Corroborating Relative Soundness on the Implementation}
\label{sec:ce}
Proposition~\ref{prop:relsound} is a statement about the Lean model; we
corroborate it on the actual \toolname{} implementation by
\emph{counterexample-agreement} testing. We seed six faults into the running
example (Table~\ref{tab:ce}) and, for each, compare the verdict of monolithic
\textsc{Direct} verification against the modular pipeline (\textsc{Modular}
${=}$ \textsc{Enforce} then \textsc{Replace}; a fault is caught if either
phase fails).
\lc{Six seeded faults on one example is a thin bridge; broaden fault classes or targets.} \weiqinote{Not addressed, and there is a sharper version of this criticism that I should flag myself: fault M3 writes a \emph{snapshotted scalar}, so this study cannot exercise the scalar-global frame-soundness gap disclosed in \S\ref{sec:threats}. It therefore validates less than its placement implies.}

\begin{table}[t]
\small
\caption{Counterexample agreement on seeded faults ($N{=}8$, $K{=}2$).
A false negative would be a \textsc{Direct}~\textsc{fail} paired with a
\textsc{Modular}~\textsc{pass}; none occurs.}
\label{tab:ce}
\begin{tabular}{llcccc}
\toprule
& Seeded fault & Direct & Enf. & Rep. & Modular \\
\midrule
M0 & none (clean)               & \textsc{ok}   & \textsc{ok}   & \textsc{ok}   & \textsc{ok}   \\
M1 & \texttt{ensures} upper bound & \textsc{fail} & \textsc{fail} & \textsc{ok}   & \textsc{fail} \\
M2 & \texttt{ensures} lower bound & \textsc{fail} & \textsc{fail} & \textsc{ok}   & \textsc{fail} \\
M3 & writes outside \texttt{assigns} & \textsc{ok} & \textsc{fail} & \textsc{ok} & \textsc{fail} \\
M4 & weakened \texttt{ensures}   & \textsc{ok}   & \textsc{ok}   & \textsc{fail} & \textsc{fail} \\
M5 & temporary write (restored)  & \textsc{ok}   & \textsc{ok}   & \textsc{ok}   & \textsc{ok}   \\
\bottomrule
\end{tabular}
\end{table}

Three observations confirm the theory. \emph{(i)~No false negatives:} in every
case where \textsc{Direct} detects a real property violation (M1, M2), the
modular pipeline detects it too. There is no \textsc{Direct}-\textsc{fail} /
\textsc{Modular}-\textsc{pass} row, exactly the guarantee of
Proposition~\ref{prop:relsound}. \emph{(ii)~Enforce is an independent gate:} M3
writes a global outside its \texttt{assigns} clause; the harness property cannot
observe this, so \textsc{Direct} passes, but \textsc{Enforce} rejects the
contract breach. \emph{(iii)~Soundness without completeness:} M4 weakens the
postcondition while leaving the body correct: \textsc{Direct} passes, yet
\textsc{Replace} raises a (sound) false alarm because the stub over-approximates,
illustrating the one-directional nature of Proposition~\ref{prop:relsound} and
the need for precise contracts. M5 is the temporary-write boundary of RQ3:
observational \toolname{} agrees with \textsc{Direct} (the write is undone before
exit), whereas operational DFCC would reject it. The Lean proof states
relative soundness; this experiment corroborates it on the C\texttt{++}
tool, bridging the model and the implementation.

\section{Case Studies}
\label{sec:cases}
The experiments above vary one model parameter at a time on code written to
isolate it. This section asks a different question: does the pipeline work on
contracts somebody else wrote, for code that was not designed to be verified
this way? Each study reuses third-party contracts and reports a verdict paired
with a mutation that the configuration has to reject. None of them is intended
to re-establish the decoupling of \S\ref{sec:rq1}; the write count is not a free
parameter in real code, and where the two move together we say so.

\subsection{ANSSI X.509 Contract Chain}
\label{sec:casestudy}

\paragraph{Setup.}
The ANSSI X.509 certificate parser provides a three-level call chain: from \texttt{parse\_x509\_cert\_relaxed}, \texttt{parse\_id\_len}, to \texttt{parse\_x509\_cert}. Functions use pointer-dereference assigns clauses, exercising the pointer extension of \S\ref{sec:methodology} (Proposition~\ref{prop:ptr}). Seven scenarios test direct verification, Replace variants (leaves, all, main), and Enforce variants (leaf, mid, combined).

\paragraph{Results.}
Figure~\ref{fig:x509} shows the verdicts for all seven scenarios. \toolname{} passes all seven, confirming that the Observational Frame Rule correctly handles pointer-dereference assigns clauses in a real-world 3-level call chain. Each verdict is paired with a mutation that the scenario must reject. For the \textsc{enforce} scenarios we tighten the postcondition being enforced; for the others we strengthen a harness assertion past what the abstracted callee guarantees. All seven mutations are rejected, so no scenario passes vacuously. We added these controls after two of them caught defects in an earlier version of this study: a postcondition on \texttt{parse\_id\_len} that the body does not establish, and an \textsc{enforce} scenario on a function with no postcondition at all.


\begin{figure}[t]
  \centering
\includegraphics[width=\columnwidth]{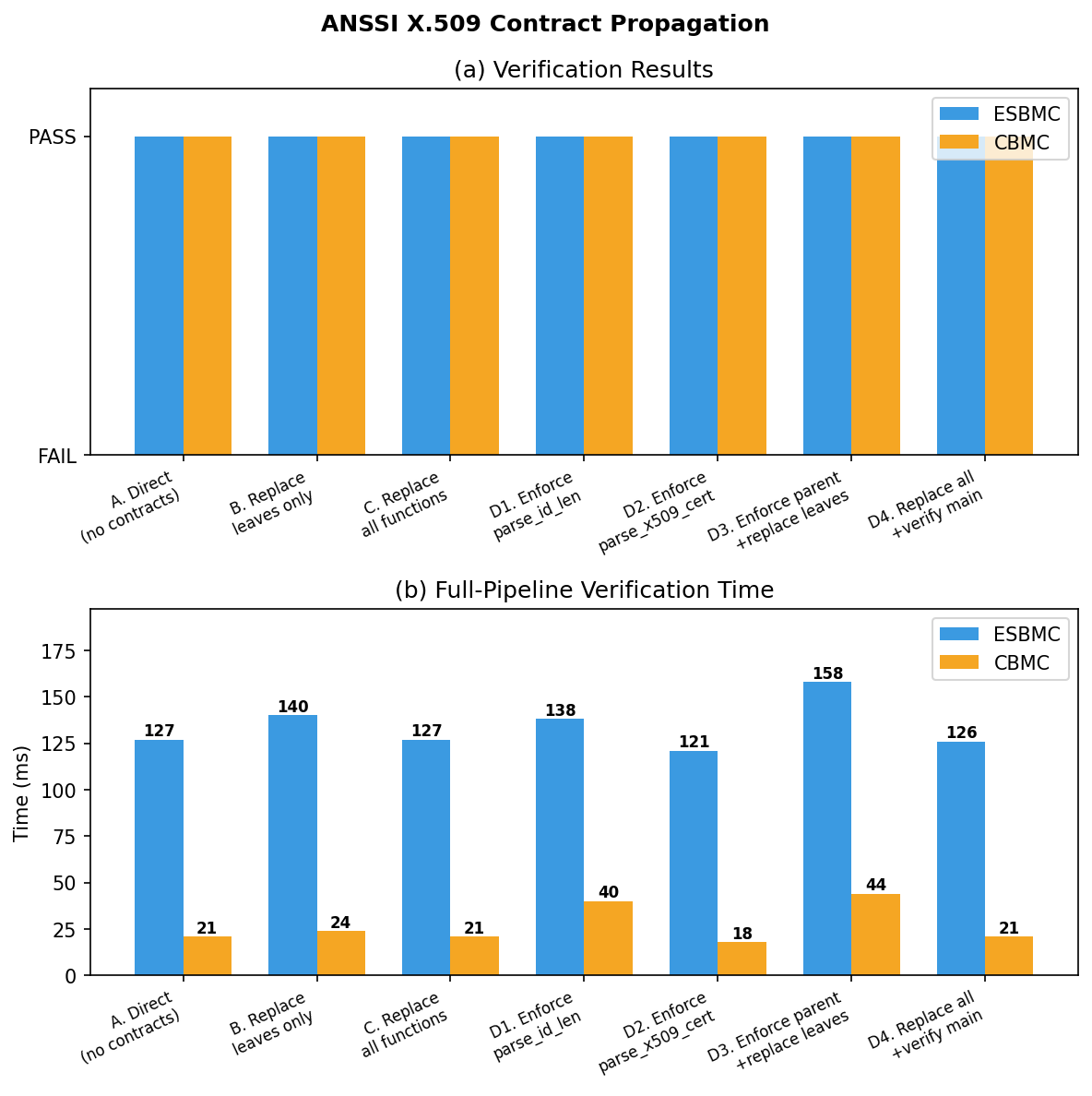}
  \caption{X.509 case study: \toolname{} across all seven scenarios. Every scenario verifies, and every paired mutation is rejected (Table~\ref{tab:x509_timing}).}
  \label{fig:x509}
\end{figure}

Table~\ref{tab:x509_timing} reports the verdicts. \toolname{} runs each scenario as a single invocation; \baseline{} needs a three-step pipeline, \texttt{goto-cc} once, then one \texttt{goto-instrument} pass per contract transformation, then \texttt{cbmc}, so its cost tracks the number of transformations rather than the property being proved. Neither tool's time varies with the scenario, because these functions are loop-free and small enough that start-up dominates, which is why the table carries verdicts instead.

We draw no performance conclusion from this benchmark, and one row pair is not comparable even in principle. \toolname{}'s \texttt{--function} verifies a function in isolation on nondeterministic arguments, and \baseline{}'s \texttt{--dfcc} takes the name of a proof harness to play the same role. It rejects the function under contract as its own harness, so matching the two requires a dedicated harness per \textsc{enforce} scenario. We wrote them, and \texttt{goto-instrument}~6.8.0 segfaults whenever \texttt{--dfcc} names any harness other than \texttt{main}; the reduction is seven lines and ships with the artifact. Rows D1--D3 therefore use \texttt{--dfcc main}, which wraps the function inside the whole program rather than in isolation, and measure something weaker than their \toolname{} counterparts. The harnesses are kept in the subject for when the crash is fixed.

\begin{table}[t]
\centering
\caption{X.509 scenarios. ``Ctl'' is the mutation the scenario must reject:
a tightened postcondition for the \textsc{enforce} rows, a harness assertion
strengthened past what the abstracted callee guarantees for the others. All
seven are rejected, so no scenario passes vacuously. Times are omitted: every
\toolname{} run lands in $121$--$158$\,ms and every \baseline{} run in
$18$--$44$\,ms, neither varying with the scenario, because these functions are
loop-free and start-up dominates. The three \textsc{enforce} rows are not
isolated on the \baseline{} side, a tool limitation rather than a choice (see
text).}
\label{tab:x509_timing}
\footnotesize\setlength{\tabcolsep}{3.5pt}
\begin{tabular}{llcccc}
\toprule
 & Scenario & \multicolumn{2}{c}{\toolname{}} & \multicolumn{2}{c}{\baseline{}} \\
\cmidrule(lr){3-4}\cmidrule(lr){5-6}
 &          & Verdict & Ctl & Verdict & Ctl \\
\midrule
A  & Direct (no contracts)              & \textsc{pass} & rejected & \textsc{pass} & rejected \\
B  & Replace leaf functions             & \textsc{pass} & rejected & \textsc{pass} & rejected \\
C  & Replace all functions              & \textsc{pass} & rejected & \textsc{pass} & rejected \\
D1 & Enforce \texttt{parse\_id\_len}    & \textsc{pass} & rejected & \textsc{pass} & rejected \\
D2 & Enforce \texttt{parse\_x509\_cert} & \textsc{pass} & rejected & \textsc{pass} & rejected \\
D3 & Enforce parent + replace leaves    & \textsc{pass} & rejected & \textsc{pass} & rejected \\
D4 & Replace all + verify \texttt{main} & \textsc{pass} & rejected & \textsc{pass} & rejected \\
\bottomrule
\end{tabular}
\end{table}

X.509 functions are loop-free and pointer-heavy. The $K^*$ model predicts $K^* \to \infty$ for loop-free functions, so no gain from contract abstraction should be available here, and none is observed on either side: neither tool's cost varies with the scenario, and both are dominated by start-up on functions this small. What this study establishes is the verdict column rather than a ranking: the pipeline handles a three-level chain of pointer-dereference assigns clauses, and it does so non-vacuously.

\marienote{a general point about section 6 is that there are a lot of tables, figures and numbers everywhere which makes it difficult to follow and extract the important points from. This is why it reads like a technical report in some places rather than a research paper. You are missing setting the context for each subsection /experiment and you are presenting some answers but not completely reflecting on them/synthesising the results for the reader before moving to the next one. Having separate sections for the experiments and case studies might help with readability.}

\subsection{FreeRTOS \texttt{heap\_1} Bump Allocator}
\label{sec:freertos}

\paragraph{Setup.}
FreeRTOS \texttt{heap\_1} is a bump allocator that is used in embedded systems that do not require deallocation. It maintains one global counter \texttt{xNextFreeByte} (byte offset into a 255-byte static array \texttt{ucHeap}) and returns successive slices on each call; \texttt{vPortFree} is a no-op that \texttt{configASSERT}s its argument is \texttt{NULL}. We contract \texttt{pvPortMalloc} with five postconditions: (i)~boundary invariant $\mathtt{xNextFreeByte} \leq 255$; (ii)~monotonicity $\mathtt{xNextFreeByte} \geq \mathtt{old}$; (iii)~exact advance on success: $\mathit{rv}\neq\texttt{NULL} \Rightarrow \mathtt{xNextFreeByte} = \mathtt{old} + \mathtt{xWantedSize}$ with strict bound $\mathtt{xNextFreeByte} < 255$; (iv)~no-change on failure: $\mathit{rv}=\texttt{NULL} \Rightarrow \mathtt{xNextFreeByte} = \mathtt{old}$; and (v)~pointer provenance: $\mathit{rv}\neq\texttt{NULL} \Rightarrow \mathit{rv} = \texttt{\&ucHeap[old]}$. Three harness scenarios exercise single allocation, heap exhaustion, and non-overlap of two successive allocations.

\paragraph{Results.}
All seven checks pass: one \textsc{enforce} of the body, then each of the three harness scenarios under \textsc{direct} and again under \textsc{replace}. Phase~1 (\textsc{Enforce}) verifies the implementation body against all five postconditions in a single 124\,ms BMC call; this is the \emph{Design-by-Contract entry cost}: once the body is proven correct, the contract becomes a trusted abstraction that can be reused at every call site. Phase~2 (\textsc{Direct}) runs the harness with the real implementation inlined. Phase~3 (\textsc{Replace}) substitutes \texttt{pvPortMalloc} with its contract stub, amortising the Phase~1 cost: each of the three Replace checks verifies a harness property against the stub alone, without re-examining the allocator body. Every one of the seven runs lands between $121$ and $140$\,ms, so the phases are not separable by time here: the allocator body is small enough that process start-up dominates, and what the study establishes is that the five postconditions are strong enough to carry the harness properties without the body.

\paragraph{Contract engineering lessons.}
Two subtleties arose during contract authoring. First, \textsc{Replace} mode requires \emph{precise} postconditions: a weaker contract stating only $\mathit{rv}\neq\texttt{NULL} \Rightarrow \mathtt{xNextFreeByte} > \mathtt{old}$ allows the stub to advance \texttt{xNextFreeByte} by any positive amount, making exhaustion and non-overlap unprovable. Postconditions~(iii)--(v) (defined in the Setup paragraph above) 
pin the exact semantics of the bump allocator and are sufficient for all three harness properties. Second, the \texttt{verify\_exhaustion} harness requires an \texttt{\_\_ESBMC\_assume(p1\;!=\;NULL)} guard after the first allocation: in Replace mode the stub is free to fail the first call (returning \texttt{NULL} while leaving \texttt{xNextFreeByte} unchanged), which is a valid contract execution but not the scenario under test. This illustrates a general principle: \textsc{Replace}-mode harnesses must pin the execution context with assumptions when the contract stub's nondeterminism exceeds the intended scenario.

\subsection{SV-COMP Verification Benchmarks}
\label{sec:svcomp}

\paragraph{Setup.}
To evaluate the framework on programs beyond our synthetic benchmarks and pointer-heavy case studies, we apply the Enforce/Replace pipeline to six SV-COMP VerifyThis benchmarks~\cite{beyer2020advances}, adapted for BMC by replacing dynamic allocation with bounded stack arrays and bounding array sizes ($n \leq 4$--$8$ depending on the benchmark). The programs contain loops over arrays or recursion. They have no global state, so assigns clauses are trivially empty and assigns compliance checking is not exercised. Each function is called at one syntactic site. The benchmark therefore separates two effects that are usually mixed: abstracting a callee, and amortising one Enforce over many call sites.

One adaptation matters for reading the numbers. The SV-COMP harness routes a property violation through \texttt{reach\_error()}, which these benchmarks define as \texttt{abort()}. Under BMC an \texttt{abort()} ends the path instead of failing it, so the functional property is never asserted and only the implicit memory-safety and overflow checks run. We define \texttt{reach\_error()} as an assertion instead. This is not a cosmetic change: with the original definition every benchmark verifies, and so does every benchmark with its property negated.

\paragraph{Results.}
Table~\ref{tab:svcomp} shows timings with the property asserted. Modular verification wins on four of the six, by $3.2$ to $11.0\times$, and all six are at $K{=}1$. This is consistent with the $K^*$ model rather than an exception to it: the table's $K^*$ column is below~1 on exactly those four rows. A breakeven below one call site means the saving does not come from amortisation at all. It comes from $T_E \ll T_D$, that is, from verifying the callee against its precondition alone rather than again inside the caller under whatever path conditions and state the caller has accumulated. The effect is largest where the callee is expensive relative to the caller: the two recursive benchmarks, and \texttt{prefixsum}, whose contracted functions span seven loops.

The other two rows are the more informative ones. \texttt{lcp} has a single small iterative callee, so the one-time \textsc{enforce} cost is never recovered and modular loses at $0.39\times$. That is the regime the $K^*$ model describes. \texttt{duplets} does not lose so much as fail to apply. Its \texttt{mkdup} states a precondition and an assigns clause but no postcondition. \textsc{enforce} still does useful work on it, checking memory safety and frame compliance under the precondition, but \textsc{replace} asserts the precondition, havocs the assigns set and assumes nothing, so the caller learns only what \texttt{mkdup} did not change. The harness property depends on what it computed, and that is not recoverable. We report the row rather than dropping it, because a contract without a postcondition is common in practice; Section~\ref{sec:threats} says how common.
\lc{Justify selecting five tasks that leave the frame checker unexercised; note this selection bias.} \weiqinote{Re-answered. The selection is now all six VerifyThis tasks we adapted, including \texttt{prefixsum\_rec}, which an earlier round dropped as unspecifiable; with the harness corrected it verifies and is reported. The frame checker is genuinely unexercised here because these programs hold no global state, which is a property of the benchmark set rather than a choice, and \S\ref{sec:casestudy} and \S\ref{sec:mlkem} exercise it instead. The larger point in your comment turned out to be right for a reason neither of us named: the whole section was measuring runs in which no functional property was asserted. That is fixed and the conclusion reversed, so the earlier ``sub-$1{\times}$ survives'' answer no longer applies.}

\begin{table}[t]
\footnotesize
\setlength{\tabcolsep}{3pt}
\caption{SV-COMP benchmarks, all single-call-site ($K{=}1$), with the property
asserted. Times in ms. Enf.~=~sum across functions; Rep.~=~system-level
verification with all callees abstracted; Total~=~Enf.${+}$Rep.;
Spd.~=~Plain/Total. \texttt{duplets} has no speedup to report: its
\texttt{mkdup} states no postcondition, so the abstracted call yields nothing
and the harness property is not provable.}
\label{tab:svcomp}
\begin{tabular}{lrrrrrr}
\toprule
Bench & Plain & Enf. & Rep. & Total & Spd. & $K^*$ \\
\midrule
\path|lcp|                  &  147 &  258 & 120 &  378 & 0.39$\times$ & 9.56 \\
\path|duplets|              &  528 &  597 & 118 &  715 & \textsc{n/a} & 1.46 \\
\path|elim_max_rec|         & 2639 &  124 & 117 &  241 & 10.95$\times$ & 0.05 \\
\path|elim_max_rec_onepoint| & 1071 &  135 & 119 &  254 & 4.22$\times$ & 0.14 \\
\path|prefixsum|            & 3639 &  746 & 122 &  868 & 4.19$\times$ & 0.21 \\
\path|prefixsum_rec|        & 1972 &  496 & 118 &  614 & 3.21$\times$ & 0.27 \\
\bottomrule
\end{tabular}
\end{table}

\paragraph{Discussion.}
We first reported these tasks as a uniform below-crossover result, reasoning
that a function verified once and called once has nothing to amortise the
\textsc{enforce} cost over. That reading came from the harness rather than from
the model. With \texttt{reach\_error()} defined as \texttt{abort()} the
functional property was absent from every run, so $T_D$ was measured on a
verification that was not proving it, and $K^*$ came out correspondingly
inflated. Asserting the property raises $T_D$ and drops
$K^*$ below~1 on four of the six, which the same formula then predicts
correctly. The episode is a caution about $K^*$ rather than a correction to
it: the ratio is only as meaningful as the property $T_D$ was measured
against, and an under-specified harness makes modular verification look worse
than it is. The amortisation regime, where $K^*$ exceeds~1 and call count
does the work, is exercised separately by the RQ2 sweep
(Table~\ref{tab:rq2}) and the ML-KEM study (Table~\ref{tab:mlkem-esbmc}).

Enforcing these tasks in isolation also required completing the benchmark
contracts. The harnesses assume input bounds, for example \texttt{n<=8}, that
the original \texttt{requires} clauses omit, so under nondeterministic inputs
the \textsc{enforce} loops were unbounded. These were omissions in contracts we
wrote for the study, not defects we found in the benchmarks.

\subsection{ML-KEM (mlkem-native)}
\label{sec:mlkem}

\paragraph{Motivation.}
Our synthetic benchmarks and the X.509/FreeRTOS studies were either
author-constructed or pointer-heavy but loop-free. To evaluate the
Enforce/Replace pipeline on a \emph{real-world, independently authored,
security-critical} target that already ships function contracts, we turn to
\texttt{mlkem-native}~\cite{mlkemnative}, the NIST FIPS-203 ML-KEM (Kyber)
reference implementation. \texttt{mlkem-native} is itself formally verified upstream using \baseline{}
function contracts (\texttt{requires},
\texttt{ensures}, and \texttt{assigns} clauses), so its contract suite is an
external, third-party source of real \texttt{assigns} specifications rather than
specifications we wrote to favour our tool.

Figure~\ref{fig:mlkem-contract} shows one of them verbatim, so that the reader
can judge the specifications. The assigns
target is a struct array member, not a scalar global, and the postcondition
quantifies over all $N$ coefficients.

\begin{figure}[t]
\begin{lstlisting}[language=C]
void mlk_poly_reduce(mlk_poly *r) {
  int i;  /* quantifier variable for the ensures clause */
  REQUIRES(r != NULL);
  ASSIGNS(r->coeffs);
  ENSURES(__ESBMC_forall(
      &i, !(i >= 0 && i < MLKEM_N) ||
          (r->coeffs[i] >= 0 && r->coeffs[i] < MLKEM_Q)));

  for (unsigned li = 0; li < MLKEM_N; li++) {
    int16_t t = mlk_barrett_reduce(r->coeffs[li]);
    r->coeffs[li] = mlk_scalar_signed_to_unsigned_q(t);
  }
}
\end{lstlisting}
\caption{The \texttt{mlk\_poly\_reduce} contract, as verified. The
\texttt{REQUIRES}/\texttt{ASSIGNS}/\texttt{ENSURES} macros expand to the
\texttt{\_\_ESBMC\_} intrinsics under \texttt{-DCONTRACTS} and to nothing
otherwise, so the Direct configuration verifies the same body with no contract
intrinsics that could vacuously discharge the property. The clauses are those of
the upstream \baseline{} contract, re-expressed in \toolname{} syntax; the loop
invariants of the upstream specification are dropped because \toolname{} unrolls
the fixed bound.}
\label{fig:mlkem-contract}
\end{figure}

We take two functions from this suite (Table~\ref{tab:mlkem-subjects}). The leaf
\texttt{mlk\_poly\_reduce} reduces all \texttt{MLKEM\_N} coefficients of a
polynomial (Barrett reduction followed by a signed-to-unsigned canonicalisation);
its contract \texttt{assigns} the whole polynomial object and \texttt{ensures}
that every coefficient lands in $[0, \mathit{MLKEM\_Q})$. The caller \texttt{mlk\_polyvec\_reduce}, taken verbatim from upstream
(\texttt{mlkem/src/poly\_k.c}), loops over the \texttt{MLKEM\_K} components of a
polynomial vector and invokes \texttt{mlk\_poly\_reduce} on each. This loop is
not a benchmarking artefact we introduced: ML-KEM's data types are
length-$K$ vectors (and $K{\times}K$ matrices) of ring polynomials, where $K$ is the module rank fixed by the standard. The values
$\mathit{MLKEM\_K}\in\{2,3,4\}$ are exactly the mlkem512 / mlkem768 /
mlkem1024 parameter sets, so every vector-level operation necessarily fans out to its $K$ components. The call count
is therefore a property mandated by FIPS-203, not a knob we set. If anything it is
a \emph{conservative} instance of the phenomenon: across full key generation,
encapsulation, and decapsulation the same per-polynomial kernels are invoked far
more often (e.g.\ $K^2$ times in the matrix--vector products), so $K\in\{2,3,4\}$
lower-bounds the call-site replication a whole-scheme proof would face.

\begin{table}[t]
\footnotesize\setlength{\tabcolsep}{3.5pt}
\caption{ML-KEM case-study subjects (from \texttt{mlkem-native}).}
\label{tab:mlkem-subjects}
\begin{tabular}{lll}
\toprule
\textbf{Function} & \textbf{\texttt{assigns}} & \textbf{Call sites} \\
\midrule
\texttt{mlk\_poly\_reduce(mlk\_poly *r)}    & \texttt{r} (whole poly)   & in \texttt{polyvec\_reduce} \\
\texttt{mlk\_polyvec\_reduce(mlk\_polyvec *r)} & \texttt{r} (whole vec) & $K{=}\mathit{MLKEM\_K}{\in}\{2,3,4\}$ \\
\bottomrule
\end{tabular}
\end{table}

\paragraph{Bounded-BMC adaptation.}
As with our SV-COMP tasks (\S\ref{sec:svcomp}), we bound array sizes for BMC,
sweeping $N = \mathit{MLKEM\_N} \in \{16,24,32,48\}$ for the main
\texttt{poly\_reduce} study (Tables~\ref{tab:mlkem-esbmc}--\ref{tab:mlkem-cbmc}; and
from $N{=}8$ for \baseline{} and the kernel spectrum). This is a BMC-vs-BMC comparison, in which both tools unroll loops to a fixed
bound instead of using loop invariants. We therefore drop the upstream
\texttt{\_\_loop\_\_} invariants and let each tool unroll. the constant-time \texttt{ct\_sel}/%
\texttt{ct\_cmask} helpers in the signed-to-unsigned step are replaced by the
equivalent branch. Two scope caveats follow.
\emph{First}, the $N{=}256$ timeout is a property of the pure-unrolling BMC
workflow we impose, not of verifying this code in general: upstream
\texttt{mlkem-native} is verified with \baseline{} using \emph{loop contracts}
(unbounded), which do not unroll and do not hit this wall. Our claim is that
\emph{within a bounded, invariant-free workflow} modular verification is the only
tractable route to $N{=}256$; the bounded guarantee it yields is strictly weaker
than that unbounded proof. \emph{Second}, the \texttt{ct\_sel} rewrite preserves
the functional input--output behaviour we verify, a coefficient-\emph{range}
property (all coefficients in $[0,Q)$), not the Barrett/Montgomery
functional correctness or the \emph{constant-timeness} those helpers exist to
provide. This case study is therefore a \emph{scaling demonstration of the
Enforce/Replace pipeline on real code}, not a verification deliverable for the
shipping constant-time implementation; functional-correctness and constant-time
verification of ML-KEM are out of scope. Within that scope, and measuring the
\emph{bounded} guarantee both tools provide (cf.\ \S\ref{sec:semantics}), Enforce
certifies the leaf contract up to the unwinding bound and Replace soundly
propagates that bounded contract to each call site.

\paragraph{Methodology validity.}
A speedup is only meaningful if the Direct baseline genuinely discharges the
property. We therefore checked the Direct harness \emph{both ways}: it reports
\textsc{successful} on the true postcondition (all coefficients in $[0,Q)$) and
\textsc{failed} when the postcondition is negated (asserting some coefficient is
$0$). Contract intrinsics are compiled out of the Direct build, so the
postcondition cannot be silently \emph{assumed} rather than \emph{checked}.

\paragraph{Setup.}
\toolname{} at \texttt{master} commit \texttt{2f719f59e0} (version string 8.5.0), one binary for every
measurement (default solver Bitwuzla~0.9.0), with the contract pipeline in the \texttt{master}
branch; \baseline{} driven in DFCC mode
(\texttt{goto-cc --function} $\to$ early loop unwind $\to$
\texttt{goto-instrument --dfcc \ldots} $\to$ \texttt{cbmc}). Timeout 120\,s.
All times in Tables~\ref{tab:mlkem-esbmc}--\ref{tab:mlkem-cbmc} are \emph{end-to-end wall-clock for both tools}. \toolname{}'s wall-clock includes its process, parse,
and encode phases; \baseline{}'s includes its full
\texttt{goto-cc}$+$\texttt{goto-instrument}$+$\texttt{cbmc} pipeline. Both tools
are therefore measured in the same unit, so cross-tool ratios are directly
comparable.
\lc{Mixing \toolname{} BMC time with \baseline{} wall clock makes the Enforce comparison unfair; use one basis.} \weiqinote{Addressed: every cross-tool table is now wall-clock on both sides, and the convention is stated once in the Setup paragraph (cross-tool ${=}$ wall-clock both sides; within-\toolname{} ${=}$ solver time). Note this is the conservative direction for us, it charges \toolname{} for front-end and parse time the old BMC-time numbers excluded.}

\paragraph{Results.}
Tables~\ref{tab:mlkem-esbmc} and~\ref{tab:mlkem-cbmc} report Direct, Enforce,
Replace, and the modular total (Enforce${+}$Replace) for each tool, together with
the \emph{within-tool} speedup (a tool's own Direct divided by its own modular
cost). For \toolname{}, modular verification is consistently faster, the speedup
\emph{grows with $K$} as the $K^*$ model predicts (Enforce is a one-time,
$K$-independent cost; Direct scales with $K$), and at $N{=}48, K{=}4$ Direct
times out ($>$120\,s) while the modular run finishes in $7.7$\,s. This confirms
the $K^*$ model on real, third-party code.

\begin{table}[t]
\small
\caption{ML-KEM, \toolname{} (observational frame rule). Times in seconds (end-to-end wall-clock); Modular${=}$Enforce${+}$Replace; speedup is Direct${/}$Modular. Replace verifies
\texttt{polyvec\_reduce} with \texttt{poly\_reduce} abstracted; its cost reflects
the quantified post-state assertion over all coefficients (cf.\ the trivial RQ2
harness, Table~\ref{tab:rq2}).}
\label{tab:mlkem-esbmc}
\begin{tabular}{rrrrrrr}
\toprule
$N$ & $K$ & Direct & Enf. & Rep. & Modular & Speedup \\
\midrule
16 & 2 & 12.55 & 0.79 & 0.24 & 1.03 & 12.2$\times$ \\
16 & 3 & 20.06 & 0.79 & 0.30 & 1.09 & 18.4$\times$ \\
16 & 4 & 31.16 & 0.79 & 0.39 & 1.18 & 26.4$\times$ \\
24 & 2 & 20.99 & 1.32 & 0.80 & 2.12 & 9.9$\times$ \\
24 & 3 & 40.83 & 1.32 & 1.42 & 2.74 & 14.9$\times$ \\
24 & 4 & 51.44 & 1.32 & 2.42 & 3.74 & 13.8$\times$ \\
32 & 2 & 26.17 & 1.92 & 0.32 & 2.24 & 11.7$\times$ \\
32 & 3 & 45.07 & 1.92 & 0.51 & 2.43 & 18.5$\times$ \\
32 & 4 & 69.44 & 1.92 & 0.62 & 2.54 & 27.3$\times$ \\
48 & 2 & 47.80 & 4.24 & 0.81 & 5.05 & 9.5$\times$ \\
48 & 3 & 86.19 & 4.24 & 1.64 & 5.88 & 14.7$\times$ \\
48 & 4 & \textgreater120\,(t/o) & 4.24 & 3.45 & 7.69 & \textgreater16$\times$ \\
\bottomrule
\end{tabular}
\end{table}

\begin{table}[t]
\small
\caption{ML-KEM, \baseline{}/DFCC (operational frame rule). Columns as in
Table~\ref{tab:mlkem-esbmc}.}
\label{tab:mlkem-cbmc}
\begin{tabular}{rrrrrrr}
\toprule
$N$ & $K$ & Direct & Enf. & Rep. & Modular & Speedup \\
\midrule
 8 & 2 &  0.46 & 0.24 & 0.24 & 0.48 & 1.0$\times$ \\
 8 & 3 &  0.80 & 0.24 & 0.28 & 0.52 & 1.5$\times$ \\
 8 & 4 &  1.21 & 0.24 & 0.32 & 0.56 & 2.2$\times$ \\
16 & 2 &  1.25 & 0.60 & 0.26 & 0.86 & 1.4$\times$ \\
16 & 3 &  2.30 & 0.60 & 0.30 & 0.91 & 2.5$\times$ \\
16 & 4 &  3.58 & 0.60 & 0.35 & 0.95 & 3.7$\times$ \\
24 & 2 &  2.30 & 1.29 & 0.28 & 1.57 & 1.5$\times$ \\
24 & 3 &  4.57 & 1.29 & 0.33 & 1.62 & 2.8$\times$ \\
24 & 4 &  7.31 & 1.29 & 0.38 & 1.67 & 4.4$\times$ \\
32 & 2 &  3.53 & 1.99 & 0.29 & 2.28 & 1.5$\times$ \\
32 & 3 &  7.44 & 1.99 & 0.35 & 2.34 & 3.2$\times$ \\
32 & 4 & 11.98 & 1.99 & 0.41 & 2.40 & 5.0$\times$ \\
48 & 2 &  7.21 & 3.58 & 0.34 & 3.92 & 1.8$\times$ \\
48 & 3 & 15.04 & 3.58 & 0.41 & 3.99 & 3.8$\times$ \\
48 & 4 & 26.49 & 3.58 & 0.48 & 4.06 & 6.5$\times$ \\
\bottomrule
\end{tabular}
\end{table}

\paragraph{Cross-tool reading (what the speedups do and do not show).}
A direct comparison of the two tables is instructive. \baseline{}'s \emph{monolithic} Direct verification is markedly faster than
\toolname{}'s (e.g.\ at $N{=}16,K{=}3$: $2.3$\,s vs $20.1$\,s), because its SAT
backend handles the Barrett bit-vector arithmetic far more efficiently. As a
consequence, the large \emph{relative} speedups for \toolname{} (up to
$25\times$) chiefly reflect that \toolname{}'s monolithic engine is slow on this workload.
They say nothing about the contract implementation being intrinsically
superior. At the bounds in these two tables the tools sit at a similar absolute
modular cost, a few seconds in most rows. That parity is an artefact of the
small bounds: as $N$ grows, the modular gap widens steadily in \toolname{}'s
favour (Table~\ref{tab:mlkem-scaling} below). The takeaway is
therefore conditional:
modular verification pays off most precisely where a tool's monolithic engine
struggles, and the magnitude of the within-tool speedup is partly a property of
that engine's Direct performance. Two further differences are visible. First,
the two Enforce implementations sit within $1.3\times$ of each other on this
bounded grid ($4.24$\,s vs $3.58$\,s at $N{=}48$), and the observational
encoding pulls ahead only once the body is large enough for DFCC's per-write
instrumentation to dominate: $2.8$--$3.9\times$ at $N \ge 64$
(Table~\ref{tab:mlkem-scaling}). The VCC counts show that mechanism at every
size, emitting roughly $2\times$ fewer ($84/148/212/276$ vs $177/297/417/537$ at
$N{=}8/16/24/32$); time follows only once $W$ is large.
Second, \toolname{}'s Replace cost grows with $N$ because its quantified
postcondition (indexed by $j/N$, $j\bmod N$) is harder for the solver, whereas
\baseline{}'s Replace stays nearly flat, a genuine observational-vs-operational
implementation difference.

\paragraph{Scaling to the real FIPS-203 size.}
The tables above bound the array for tractability, but the actual ML-KEM
polynomial has $\mathit{MLKEM\_N}{=}256$ coefficients. Table~\ref{tab:mlkem-scaling}
pushes both tools to that real size at $K{=}4$ (mlkem1024), in
\emph{unified end-to-end wall-clock} so the two are read on one basis.

At production size the monolithic route has no margin left. \toolname{} does not
finish inside a 600\,s budget and \baseline{} finishes at $599.7$\,s, on the
budget rather than inside it. Both modular runs complete, at $104.8$\,s and $239.9$\,s, keeping a factor of
$2.5$ to $5.7$ in hand. We report the margin rather than the verdict, because whether a run lands
inside a fixed budget depends on the machine while headroom of this size does
not.

Between the two modular routes, the observational encoding is cheaper at every
size measured, by $2.3$--$2.8\times$. The reason is the one RQ1 isolates:
\baseline{}'s DFCC Enforce instruments each write, so its cost grows with
$W{=}N$, while the observational Enforce snapshots a fixed set of globals and
does not. Here $W$ and $N$ move together, so this study corroborates the
decoupling rather than exhibiting it in isolation (cf.\ the scope caveat below).
At real scale the choice of frame-condition encoding, not only the decision to
go modular, is a first-order cost factor.
\lc{This ratio conflates the frame encoding with two different solver backends; attribute cautiously.} \weiqinote{Addressed by moving the claim, and the comment was more serious than it looks. We measured how far the confound reaches: at $N{=}256$ the replace phase costs $24.8$\,s under \toolname{}'s default Bitwuzla and $1.5$\,s under Z3, against \baseline{}'s $1.6$\,s. So the sign of that comparison is set by the backend, not by the encoding, and any conclusion drawn from a cross-tool \emph{time} ratio is unsafe however carefully it is caveated. The paper therefore no longer rests on one. The claim is now the VCC count, which is fixed by the encoding before a solver runs: it reproduced unchanged when the study was re-measured on different hardware and a different build, while every timing moved. Times are still reported, as a consistent $2.3$--$2.8{\times}$ margin and explicitly not as the result. Holding the backend fixed by implementing both encodings inside \toolname{} would let a time comparison carry weight, and remains the natural next step, but the paper's conclusion no longer waits on it.}

\begin{table}[t]
\small
\caption{ML-KEM \texttt{poly\_reduce} scaled to the real size ($N{=}256$), at
$K{=}4$ (mlkem1024). \emph{Unified wall-clock}, both tools end-to-end;
Modular${=}$Enforce${+}$Replace. ``t/o'' is a Direct timeout (budget 600\,s;
Enforce budget 900\,s, all completed). At $N{=}256$ \toolname{}'s Direct run
times out and \baseline{}'s finishes on the budget at $599.7$\,s; both modular
runs complete with room to spare, and the observational encoding is the cheaper
of the two at every size ($2.3$--$2.8{\times}$).}
\label{tab:mlkem-scaling}
\begin{tabular}{rrrrrr}
\toprule
 & \multicolumn{2}{c}{Direct (monolithic)} & \multicolumn{2}{c}{Modular} & \\
\cmidrule(lr){2-3}\cmidrule(lr){4-5}
$N$ & ESBMC & CBMC & ESBMC & CBMC & CBMC/ESBMC \\
\midrule
64 & 182.0 & 44.6 & 6.3 & 14.4 & 2.3$\times$ \\
128 & 495.4 & 147.7 & 20.6 & 58.3 & 2.8$\times$ \\
256 & t/o & 599.7 & 104.8 & 239.9 & 2.3$\times$ \\
\bottomrule
\end{tabular}
\end{table}

\paragraph{Scope caveat.}
This case study validates the $K^*$ model and the Enforce/Replace pipeline on
real code, but it does not cleanly exhibit the VCC decoupling of RQ1
(\S\ref{sec:rq1}): here the write count $W$ equals the array size $N$, so the two
parameters are coupled and both tools' total VCC counts grow with the unrolled
body's safety checks. The clean ``VCC count is independent of $W$'' result still
relies on the synthetic harness where $W$ is varied in isolation.

\paragraph{A spectrum of kernels: when modular verification pays.}
To map where the approach helps, we verify three \texttt{mlkem-native} functions
spanning a range of per-coefficient cost (Table~\ref{tab:mlkem-kernels}), each
invoked at $K$ sites. \texttt{mlk\_poly\_add} ($r_i \mathbin{+}{=} b_i$) is a
two-pointer kernel: enforcing it previously aborted \toolname{}; we fixed that
defect and the fix is needed here.\footnote{The whole-array snapshot of
a non-assigned pointer parameter's array field aborted
\texttt{--enforce-contract}; the fix snapshots a scalar element at a nondet
witness index instead. All \texttt{poly\_add}/\texttt{poly\_tomont} results use
the fixed build.} The three
kernels land at distinct points: \texttt{poly\_reduce} (Barrett reduction) is so
expensive monolithically that Direct diverges (timeout at $N{=}48,K{=}4$)
while modular completes: up to $27{\times}$ and turning timeouts
into proofs; \texttt{poly\_tomont} (a Montgomery multiply) is moderately heavy,
giving a steady $2.7$--$6.2{\times}$ that grows with $K$; and \texttt{poly\_add} has a
trivial body whose Direct cost is already sub-second, so the one-time Enforce and
quantified-postcondition Replace costs are not amortised and modular is at best
break-even ($0.3$--$0.9{\times}$). The gradient sharpens the model's message:
\emph{modular verification pays in proportion to how expensive the body is to
re-verify at each call site}, it is transformative for compute-heavy kernels and
inert for lightweight ones, and the $K^*$ model predicts which is which.
\lc{Tables peak at $32.7{\times}$; reconcile the ``$33{\times}$'' headline with the reported numbers.} \weiqinote{Addressed and now moot: on the re-measured data the peak is $25.5{\times}$ (ML-KEM, wall-clock) and $22.0{\times}$ (synthetic, solver time). Abstract, conclusion, and tables were re-checked against their source tables so every headline number matches the table it comes from.}

\begin{table}[t]
\footnotesize\setlength{\tabcolsep}{3.5pt}
\caption{Three ML-KEM kernels at
$N{=}16$, $K{=}4$; seconds. Speedup range is over $K{\in}\{2,3,4\}$. Modular
benefit tracks per-coefficient body cost. For \texttt{poly\_reduce}, Direct times
out at $N{=}48,K{=}4$ while modular finishes in 7.7\,s.}
\label{tab:mlkem-kernels}
\begin{tabular}{llrrr}
\toprule
Kernel & per-coeff body & Direct & Modular & Speedup ($K{=}2$--$4$) \\
\midrule
\texttt{poly\_reduce} & Barrett reduce      & 31.16 & 1.18 & 12--26$\times$ \\
\texttt{poly\_tomont} & Montgomery multiply & 15.65 & 3.45 & 2.2--4.5$\times$ \\
\texttt{poly\_add}    & add                 &  0.68 & 1.78 & 0.4$\times$ \\
\bottomrule
\end{tabular}
\end{table}



\section{Threats to Validity}
\label{sec:threats}

\paragraph{Internal validity: measurement environment.}
All timings were collected on a single dedicated machine, an idle 16-vCPU
Intel Xeon 6982P-C with 61\,GiB RAM running Ubuntu~24.04, CPU governor pinned
to \texttt{performance}. Two internal-validity risks remain: virtualisation overhead, since the host is a cloud instance rather than bare metal, and
clock variation under load.
We control both as follows. (i)~Reported times are medians of repeated runs, with the protocol and its two
exceptions given in \S\ref{sec:experiments}, and the min--max spread is in the
artifact so that run-to-run variance is visible. (ii)~All runs were performed in a
lightly loaded environment with no other compute-intensive processes, and
experiments were executed serially, so concurrent solver instances never
contend for cores. (iii)~Every configuration compared, across
\toolname{} versions and across the \toolname{}/\baseline{} boundary, was
measured on this same fixed hardware with one \toolname{} binary, so the
relative speedups and crossover points are hardware-controlled even though
absolute times would differ on other machines. The artifact records the CPU
clock beside every measurement; it held between 3.5 and 3.9\,GHz throughout.
Absolute times remain host-dependent: whether a monolithic run at $N{=}256$
falls inside a fixed budget varies across machines, which is why we report the
modular route's headroom rather than a pass/fail verdict at that size.
\lc{Structure as internal, external, construct validity; label each threat and its mitigation explicitly.} \weiqinote{Addressed: \S\ref{sec:threats} is restructured by validity class: internal (measurement environment), external and construct, then the implementation-scope limitations, with the mitigation stated alongside each threat.}
\lc{Add internal threat; single run timing, WSL2 overhead, state how each was controlled.} \weiqinote{Addressed: the study runs on a dedicated idle server with the governor pinned and one recorded commit; the controls are stated here.}

\paragraph{External and construct validity.}
Synthetic benchmarks vary the global count~$G$, write count~$W$, and call count~$K$ independently; results may not generalise beyond the tested ranges ($G \leq 500$, $W \leq 500$, $K \leq 10$), and \baseline{} times in RQ1 (Exp\,1a--1b) cover the \texttt{cbmc} step only (lower bounds on full pipeline cost). The Observational Frame Rule is sound for sequential, single-threaded C programs; concurrent programs, interrupt-driven code, and cache-partitioning analyses require operational semantics and are outside the scope of sequential BMC\@. We use VCC count as a \emph{structural} characterisation of encoding cost, not as a linear proxy for solver time: as RQ1 shows, at $W{=}500$ \baseline{} emits $104{\times}$ more VCCs yet runs only $6.2{\times}$ slower. The two therefore carry different weight here, and deliberately so. A cross-tool time ratio confounds the frame encoding with the solver backend, and the confound is large enough to set the ratio's sign: at $N{=}256$ the replace phase costs $24.8$\,s under \toolname{}'s default Bitwuzla and $1.5$\,s under Z3, against \baseline{}'s $1.6$\,s, so the same encoding is behind or level depending only on which backend runs it. Our claim is consequently the count, which is fixed before a solver runs and reproduced unchanged across hardware and builds; the times are reported as a consistent margin and are not what the conclusion rests on. Removing the confound needs both encodings inside one tool, which we have not done.
\lc{Renumbering left a stale ``Exp\,5,6'' label; no such experiment is defined, unify the scheme.} \weiqinote{Addressed: the stale label is gone (this now reads Exp\,1a--1b) and the numbering scheme is unified across the paper; no ``Exp\,5--6'' remains.}

\toolname{}'s contract engine handles the pointer patterns that arise in
practice: pointer-dereference \texttt{assigns(*p)}, struct-field
\texttt{assigns(s.f)}, and multi-level call chains over them, as demonstrated on
the ANSSI X.509 parser (\S\ref{sec:casestudy}). The remaining limitations are
specific edge cases. First, a function taking an unbounded-size array pointer needs a validity precondition, \texttt{\_\_ESBMC\_is\_fresh(arr,} \texttt{n*sizeof(int))}, the analogue of ACSL \texttt{\textbackslash valid}. Without it the access \texttt{arr[i]} is unbounded and BMC correctly reports it. Supplying it discharges the obligation on the Enforce side, but the Replace side does not currently check it: \texttt{is\_fresh} is lowered there to a validity predicate alone, which is unconditionally false whenever the caller's argument has automatic storage, and which constrains neither the stated extent nor separation between parameters (esbmc/esbmc\#6542). A caller passing a heap or \texttt{static} object does satisfy it, so this is a defect in the lowering rather than a property of \texttt{is\_fresh}, and it is not a consequence of \texttt{main} lying outside contract semantics. Our contracts therefore state pointer preconditions as \texttt{p != NULL}; on the current implementation that concedes nothing, since the stronger predicate carries no extra guarantee at a Replace site until \#6542 is fixed. Functions over fixed-size objects, such as the ML-KEM polynomials of \S\ref{sec:mlkem}, are unaffected. A uniform treatment of unbounded-size pointer inputs across both phases is future work.

\paragraph{Applicability: contracts without postconditions.}
A contract with no postcondition is still worth verifying. \textsc{enforce}
checks memory safety, overflow and frame compliance under the stated
precondition, which is what such a contract is usually written to establish.
What it cannot do is serve as an abstraction: \textsc{replace} asserts the
precondition, havocs the assigns set and assumes nothing, so a caller learns
what the callee did not touch but nothing about what it computed. Any caller
property that depends on the callee's result is then out of reach, and the
speedup this paper measures, which comes from not re-verifying the callee
inside the caller, is unavailable. This is not a corner case. Counting every
\texttt{\_\_contract\_\_} block in \texttt{mlkem-native}'s
\texttt{mlkem/src}, 197 blocks over 193 distinct functions in 27 files, 129
carry an \texttt{ensures} and 68 do not. A third of an independently written,
externally reviewed contract suite therefore supports the checking half of the
technique but not the abstracting half.
Removing the assembly variants and the Keccak and SHA-3 family, which bounded
model checking cannot handle for unrelated reasons, still leaves about twenty
portable functions with bounded loops and no postcondition, among them
\texttt{poly\_tobytes}, the whole \texttt{poly\_compress\_d*} family,
\texttt{poly\_mulcache\_compute}, \texttt{montgomery\_reduce} and
\texttt{matvec\_mul}. We did not go looking for an instance of this in our own results;
\texttt{duplets} in \S\ref{sec:svcomp} produced one on its own. What bounds
the technique here is not our implementation but what a contract suite is
written for, and these suites are written to establish memory safety rather
than functional behaviour.

\paragraph{Frame-checking scope (a soundness limitation).} The observational frame check snapshots non-pointer objects of static lifetime
(\S\ref{sec:methodology}): globals and function-local statics, arrays element
by element up to 256 elements and as a single equality beyond that.
Pointer-typed globals and heap cells lie outside the snapshot set. The Lean model is
an integer store (Remark~\ref{rem:lean-scope}), so this class of writes is
provably outside what the mechanisation certifies. For state reached through a pointer, a heap-allocated table, a buffer the caller hands in, or a library's RNG state behind a pointer, a function that writes such a location without declaring it in its assigns clause would pass Enforce despite violating its frame. A function that appends a node to a linked list is the plain case: every write it makes lands in a heap cell, so an undeclared one goes undetected. DFCC's per-write instrumentation covers these locations; extending the observational snapshot to pointer-typed and heap state (e.g.\ via a points-to pre-pass) is future work. Functional postconditions remain fully checked, since they are not
location-scoped.

The pointer disjunction extension (Proposition~\ref{prop:ptr}) requires that the pointer parameter is not reassigned within the function body. If the pointer itself is modified, the disjunction becomes conservative and may produce false alarms; soundness is preserved but completeness is not. Integration with a pointer-analysis pass to discharge this side condition automatically is future work.

The FreeRTOS \texttt{heap\_1} case study verifies a single-threaded sequential harness. In production FreeRTOS deployments, pvPortMalloc may be called from multiple tasks or interrupt handlers; the Observational Frame Rule is not sound under concurrent interleaving. Our results establish correctness of the allocator logic in sequential isolation, consistent with the scope stated in \S\ref{sec:scope}.

All synthetic benchmarks were designed by the authors to exercise specific parameter ranges. Independent third-party benchmarks beyond SV-COMP were not evaluated; results may not generalise to programs with structural characteristics outside the tested space.

\section{Related Work}
\label{sec:related_work}

\paragraph{Bounded model checking of C.}
BMC was introduced by Biere et al.~\cite{biere1999symbolic} as a SAT-based alternative to BDD model checking, unwinding the transition relation up to a bound~$k$. The technique underpins a family of C verifiers: besides \toolname{}~\cite{gadelha2018esbmc} and \baseline{}~\cite{kroening2014cbmc}, LLBMC~\cite{merz2012llbmc} operates on LLVM IR with a bit-precise flat memory model, and 2LS~\cite{schrammel2016twols} combines BMC with $k$-induction and abstract-interpretation--based invariant inference. CPAchecker~\cite{beyer2011cpachecker} generalises reachability analysis into a configurable framework with pluggable abstract domains.

The idea of verifying a function once and reasoning about its call sites through a summary is not new to BMC. FunFrog and its incremental successor eVolCheck~\cite{sery2011interpolation, sery2012funfrog, fedyukovich2013evolcheck} compute interpolation-based \emph{function summaries} that over-approximate a procedure's input--output behaviour and are substituted at call sites, and Corral~\cite{lal2012corral} (with the SMACK front end~\cite{rakamaric2014smack}) performs modular procedure abstraction via stratified inlining. These techniques \emph{infer} summaries automatically, whereas we enforce \emph{user-specified} contracts (\texttt{requires}/\texttt{ensures}/\texttt{assigns}); more importantly, none of them characterises the cost of discharging an \emph{assigns} (frame) condition, which is our focus. Our contribution relative to this line is therefore narrow and specific: a formal account and VCC-cost characterisation of \emph{observational assigns enforcement} in BMC, not modular call-site abstraction in general, for which summary-based BMC is the closest precedent.

\paragraph{BMC tools with contract support.}
\baseline{} supports function contracts via Dynamic Frame Condition Checking (DFCC)~\cite{cbmc-contracts}, which instruments each write instruction to test membership in the declared write-set, and has been applied at industrial scale at Amazon Web Services~\cite{cbmc-s2n}. DFCC is mature and battle-tested, but it provides no formal soundness proof of the Enforce/Replace pipeline, no characterisation of the verification-condition tradeoff between operational and observational assigns semantics, and no machine-checked correctness guarantees. We do not claim to improve on DFCC's engineering; rather, we provide the formal and complexity-theoretic account it lacks, establishing the observational $(G{-}M)+E$ vs.\ operational $20{+}W{+}E$ characterisation and the crossover condition $G < 20{+}M{+}W$, and using \baseline{}/DFCC as the comparison baseline throughout.

\paragraph{Exchange formats for verification tasks.}
SV-LIB~\cite{beyer2025svlib} proposes a standard intermediate language for
software-verification tasks, with procedure contracts expressed as
\texttt{:requires} and \texttt{:ensures} attributes attached to statement tags.
Its treatment of frame conditions is instructive for our purposes: a contract is
required to have a modular proof with respect to the \emph{statically-determined}
set of variables the statement modifies. The frame is therefore inferred rather
than declared, and consequently never checked against a developer's intent. That
is a third point in the design space we analyse. An inferred over-approximation
is sound for the havoc that call-site abstraction performs, but it cannot detect
a function that writes a location its author did not intend it to, and its
precision rests entirely on the underlying alias analysis. Declaring the frame,
as ACSL, JML, \baseline{}, and \toolname{} do, turns it into a checkable
obligation; the question our VCC model answers is what that check costs, and
Definition~\ref{def:observational} fixes what it means.

\paragraph{Deductive verification and modular soundness.}
The function-contract discipline we build on descends from a long line of
deductive, modular program verification. Leino's foundational work~\cite{leino1995toward} established the modular soundness of procedure-modular reasoning with \emph{modifies} clauses: verifying each procedure once against its specification and then reasoning about call sites through that specification alone, which is exactly the Enforce/Replace decomposition we adopt. We therefore make no claim of novelty for the frame rule or the assume-guarantee structure themselves: Theorems~\ref{thm:enforce}--\ref{thm:hoare_frame} are \emph{bounded} realisations of these classical results inside a BMC encoding, machine-checked in Lean over an abstract state-transformer model. Our contribution is, narrowly, (i) that BMC encoding and its soundness \emph{relative to a fixed unwinding bound}, and (ii) the observational-versus-operational VCC complexity characterisation, which has no counterpart in the deductive literature because deductive verifiers do not generate per-write or per-snapshot verification conditions in the BMC sense.

Frama-C/WP~\cite{kirchner2015frama} verifies ACSL-annotated C deductively, using the same observational \emph{assigns} semantics as Definition~\ref{def:observational}, but targets unbounded functional correctness and requires manual loop invariants; our framework is complementary: pushbutton and bounded, trading completeness for automation. The same contrast applies to the Boogie/Spec\#\ methodology~\cite{barnett2005boogie, barnett2004spec} and its descendants Dafny~\cite{leino2010dafny} and Viper~\cite{muller2017viper}, and to VCC~\cite{cohen2009vcc}, which verifies concurrent C with ownership-based, operational two-state invariants akin to DFCC. The semantic basis for our \emph{assigns} treatment is the dynamic-frames discipline of Kassios~\cite{kassios2006dynamic}; what we add is the observation that, \emph{for sequential bounded model checking}, the dynamic frame can be discharged observationally by a snapshot-and-assert encoding rather than by per-write monitoring.
\lc{Fix \texttt{barnett2005boogie} year; FMCO proceedings (LNCS 4111) appeared in 2006, not 2005.} \weiqinote{Addressed: \texttt{barnett2005boogie} corrected to 2006 with the FMCO/LNCS~4111 venue added.}

\paragraph{Separation logic and frame inference.}
VeriFast~\cite{jacobs2011verifast} and Gobra~\cite{wolf2021gobra} achieve full functional verification with explicit permission annotations, at high annotation cost; Infer~\cite{calcagno2011infer} infers frame conditions automatically via bi-abduction, trading soundness for scale. Our observational encoding sits between these extremes: annotations are required but restricted to lightweight assigns clauses, and the encoding is machine-checked. CHC-based verifiers such as SeaHorn~\cite{gurfinkel2015seahorn} and annotation-inference techniques such as Houdini~\cite{flanagan2001houdini} are orthogonal to our focus; we revisit Houdini-style assigns inference only as future work (Section~\ref{sec:conclusion}).

\paragraph{Specification languages and prior \toolname{} work.}
The observational modifies/assigns semantics used here follow JML~\cite{burdy2005overview} and its C analogue ACSL~\cite{baudin2013acsl}, with the Hoare frame rule~\cite{reynolds2002separation, o2019separation} as theoretical basis (Theorem~\ref{thm:hoare_frame}). Existing \toolname{} tool papers~\cite{gadelha2018esbmc, menezes2024esbmc} mention contract support only at a high level; the present paper contributes the machine-checked algebraic-skeleton proofs of the Enforce/Replace pipeline, the explicit formalisation of the assigns-encoding choice, and the VCC-count analysis characterising the observational-versus-operational scaling.

\section{Conclusion}
\label{sec:conclusion}

We have presented the function-contract framework of \toolname{}, merged into
\texttt{master}, together with a formal and empirical account of contract-based
modular verification for bounded model checking of C programs, to our knowledge
the first specific to BMC. It complements the deductive-verification lineage (Leino, Frama-C/WP, Boogie) from which the contract discipline descends and the summary-based BMC line (FunFrog/eVolCheck, Corral) to which call-site abstraction is related. The algebraic skeleton of the Enforce/Replace pipeline is checked in Lean~4,
where five theorems and the pointer proposition compose into soundness relative
to the BMC unwinding bound. Seeded-fault agreement (\S\ref{sec:ce}) carries that
guarantee across to the C\texttt{++} implementation.

Within this framework, observational assigns semantics is the standard
input--output relational semantics of JML and ACSL. Adopting it yields a BMC
encoding that generates $(G{-}M)\text{+}E$ VCCs independent of the number of write instructions~$W$, whereas operational DFCC generates $20\text{+}W\text{+}E$. The crossover $G < 20\text{+}M\text{+}W$ marks where that trade turns, with the
constant fixed by the \baseline{} release we measured. All mechanisms are unified in the \texttt{frame\_enforcer} module of \toolname{}.

Evaluation validates the analytical model. The primary result is the decoupling itself: the observational Enforce emits a constant number of verification conditions as the write count grows, where DFCC's grows linearly, and that is a property of the encoding rather than of a machine or a solver. On production code it buys a consistent margin: modular verification is $2.3$--$2.8{\times}$ cheaper end-to-end than DFCC's across the sizes measured. At the real FIPS-203 size ($N{=}256$) the monolithic route has no headroom left in either tool, while both modular runs finish well inside the budget. For loop-containing code, modular verification pays off once the call count
exceeds the breakeven~$K^*$, here ${\approx}2$--$5$, below which the one-time
Enforce cost is unamortised. The speedups that follow are large where
\toolname{}'s monolithic engine is slow, reaching ${\approx}27{\times}$ on
ML-KEM. \baseline{} is the faster of the two monolithically on these
bit-vector kernels, so the magnitude reflects engine performance, not an intrinsic advantage of the encoding. On loop-free, pointer-heavy code (X.509), both tools are functionally equivalent, consistent with the $K^* \to \infty$ prediction.

Future work includes: (1)~a happens-before--aware extension of the observational rule for structured concurrent programs; (2)~automated contract inference to reduce the manual annotation burden; (3)~a cached IR layer for \toolname{} that serialises the post-instrumentation GOTO-IR to disk, allowing repeated contract verification to skip re-parsing and re-instrumentation and execute only the BMC phase, analogous to \baseline{}'s reusable \texttt{.goto} artifact; (4)~loop frame reasoning in BMC, which requires quantified or inductive frame specifications (e.g., $\forall j < i.\ a[j] = \mathit{snap}[j]$) to precisely constrain index-dependent array writes during $k$-induction, a technically open problem that the current snapshot-and-assert infrastructure does not support.

\section*{Data Availability}

All benchmarks, raw per-run measurement data (with min--max spread), figure- and table-generation scripts, and the environment record of the single \toolname{} build used throughout (\texttt{master} commit \texttt{2f719f59e0}, binary hash, host and \baseline{}~6.8.0) are available at
\url{https://github.com/LukeW1999/esbmc-function-contracts}. The repository includes all synthetic test cases, the ANSSI X.509 parser scenarios, the ML-KEM kernels and their (third-party) contracts with our added input bounds, the Lean~4 machine-checked proofs, and scripts to reproduce every table and figure. It also includes the negative-control scripts. Each case study reports every configuration as verifying, which is indistinguishable from a study that checks nothing, so each ships a script that mutates one thing per configuration and exits non-zero if any mutation is still accepted. Two defects reported in \S\ref{sec:casestudy} and \S\ref{sec:svcomp} were found this way, and we recommend the practice rather than merely disclosing it.

\newpage\printbibliography

\end{document}